\newsavebox{\measurebox}
\definecolor{gold}{rgb}{0.85,.66,0}
\def\munderbar#1{\underline{\sbox\tw@{$#1$}\dp\tw@\z@\box\tw@}}
\newtheorem{theorem}{Theorem}
\newtheorem{remark}{Remark}
\newtheorem{definition}{Definition}
\begin{document}



\newcommand{\papertitle}{
    User-Centric Perspective in Random Access Cell-Free Aided by Spatial Separability
}

\title{\papertitle}

\author{
    {Victor Croisfelt},
    {Taufik Abrão}, and
    {José Carlos Marinello}\\
    \vspace{5mm}
    {\small
    \textbf{NOTE:} This work has been submitted to the IEEE for possible publication.\\
    Copyright may be transferred without notice, after which this version may no longer be accessible.
    }
    \thanks{V. Croisfelt was with the Electrical Engineering Department, Universidade de
    São Paulo, Escola Politécnica, São Paulo, Brazil Now, he is with the Connectivity Section of the Department of Electronic Systems, Aalborg University, Aalborg, Denmark; E-mail: \texttt{vcr@es.aau.dk}} 
    \thanks{{T. Abrão is with the Electrical Engineering Department, State University of Londrina, PR, Brazil.  E-mail: \texttt{taufik@uel.br}}}%
    \thanks{J. C. Marinello is with the Electrical Engineering Department, Universidade Tecnológica Federal do Paraná, Cornélio Procópio, PR, Brazil.  E-mail: \texttt{jcmarinello@utfpr.edu.br}}%
\vspace{-5mm}}


\maketitle

\begin{abstract} 
    In a cell-free massive multiple-input multiple-output (CF-mMIMO) network, multiple access points (APs) actively cooperate to serve users' equipment (UEs). We consider how the random access (RA) problem can be addressed by such a network under the occurrence of pilot collisions. To find a solution, we embrace the user-centric perspective, which basically dictates that only a preferred set of APs needs to serve a UE. Due to the success of the strongest-user collision resolution (SUCRe) protocol for cellular (Ce) mMIMO, we  extend it by considering the new setting. {Besides,} we establish that the user-centric perspective naturally equips a CF network with robust fundamentals for resolving collisions. We refer to this foundation as spatial separability, which enables multiple colliding UEs to access the network simultaneously. We then propose two novel RA protocols for CF-mMIMO: i) the baseline cell-free (BCF) that resolves collisions with the concept of spatial separability alone, and ii) the cell-free SUCRe (CF-SUCRe) that combines SUCRe and spatial separability principle to resolve collisions. We evaluate our proposed RA protocols against the Ce-SUCRe. Respectively, the BCF and CF-SUCRe can support 7$\times$ and 4$\times$ more UEs' access on average compared to the Ce-SUCRe with an average energy efficiency gain based on total power consumed (TPC) by the network per access attempt of 52$\times$ and 340$\times$. Among our procedures, even with a higher overhead, the CF-SUCRe is superior to BCF regarding TPC per access attempt. This is because the combination of methods for collision resolution allows many APs to be disconnected from the RA process without sacrificing much the performance. {Finally, our numerical results 
    can be reproduced using the code package available on: \emph{github.com/victorcroisfelt/cf-ra-spatial-separability}.} 
\end{abstract}

\begin{IEEEkeywords}
Cell-free; Massive MIMO; User-centric; Random access; Grant based protocols; Spatial separability; 6G systems
\end{IEEEkeywords}

%
\IEEEpeerreviewmaketitle

\section{Introduction}\label{sec:intro}
\IEEEPARstart{I}{n} future mobile networks, massive wireless connectivity is a fundamental requirement sought to support the expected huge amount of users' equipment (UEs) \cite{Popovski2020,Bana2019}. Thereby, the task of improving the performance of random access (RA) to the network has been the focus of considerable research \cite{Bana2019}. Recently, cellular massive multiple-input multiple-output (Ce-mMIMO) technology has proven to be a great ally in achieving these improvements with, for example, the introduction of the strongest-user collision resolution (SUCRe) protocol in \cite{Bjornson2017}. SUCRe exploits the channel hardening and favorable propagation capabilities of Ce-mMIMO systems, allowing for a distributed resolution in each of the UEs of collisions that occur due to pilot shortage. SUCRe's success resulted in the proposition of several variants that seek to solve problems of the original version from \cite{Bjornson2017}, 
such as the unfairness caused by the fact that UEs closer to the base station (BS) are more likely to be favored by the protocol \cite{Marinello2019}.

{With the increasing challenge of pushing the boundaries of connectivity beyond the offering of high wireless data rates and expand towards interconnecting humans, machines, robots, and things \cite{popovski2021IoT}, new technologies has been gaining attention in addition to Ce-mMIMO. In fact, as the most diverse types of services rely more on the network, we need to raise the \emph{uniformity} of how these services gain access to the network and the \emph{availability} (coverage) of this access everywhere. Because of the intelligent use of spatial diversity, Ce-mMIMO partially provides the uniformity and ubiquity requirements; however, the imbalance persists due to interference and {the low performance delivered to UEs located at the cell-edges} \cite{Demir2021}, which are intrinsic problems of {the cellular perspective and, hence,} of SUCRe \cite{Bjornson2017}. Following this line,} the authors of \cite{Carvalho2020} introduced the concept of extra-large-scale MIMO (XL-MIMO) by arguing that Ce-mMIMO operates differently when the size of the antenna arrays is very large{, bringing some performance benefits}. Remarkably, the authors of \cite{Nishimura2020} adapted the SUCRe protocol for XL-MIMO systems; {while \cite{Marinello2021} better exploits the concept of visibility regions to separate UEs}. 

Another promising approach is to completely get rid of the cellular perspective of design, giving rise to the notion of cell-free (CF) mMIMO networks \cite{Demir2021}. In these systems, a UE is surrounded by now called access points (APs) that actively cooperate with each other and also make use of the spatial principles behind mMIMO technology. In particular, we are interested in the \emph{user-centric perspective} of CF networks \cite{Demir2021}, where a UE is served by a preferred subset of APs {based on its needs. With the selection and jointly operation of the preferred APs, the goal of a more uniform and ubiquitous network can be better addressed \cite{Demir2021}. Herein, we exploit the design of a CF-mMIMO network to raise the uniformity and ubiquity of the RA.} Since CF-mMIMO has been gaining interest only recently, {there are few works considering the} RA framework under such scenario \cite{Ganesan2020,Henriksson2020,Chen2021,Ganesan2021}.

In this work, we propose an extension of the SUCRe protocol for CF-mMIMO networks. Our motivation comes from the lack of protocols in the literature that adequately exploit the user-centric CF design {for RA}. Furthermore, we emphasize the interest in grant-based (GB) RA protocols, which are justified when the size of the information to be transmitted by UEs is large enough, like in crowded enhanced mobile broadband services \cite{Bjornson2017}, {and other important Internet of Things (IoT) applications, such as street video monitors, vehicle-to-vehicle communications, vehicle auto-diagnosis and autopilot, remote surgery, and smart-home/enterprise services, including video camera, TV, laptop, and printer \cite{Chen17}}. Alternatively, grant-free (GF) protocols are intended for application scenarios in which the communication of UEs is very sporadic and short, {typically associated to} massive machine-type communications {(mMTC) applications such as wireless sensor networks and smart monitoring \cite{Bana2019}}.

\subsection{{Literature Review}}
There are few works that discuss the RA problem in CF-mMIMO networks \cite{Ganesan2020,Ganesan2021,Henriksson2020,Chen2021}. In \cite{Ganesan2020} and \cite{Ganesan2021}, the authors proposed a GF protocol based on activity detection using the maximum likelihood method, showing the first traces of RA performance gains with the CF architecture. These improvements are intrinsically obtained by exploiting the \emph{augmented macro-diversity} owing to the existence of diverse geographically distributed APs. Of course, the price to pay for this macro-diversity is the implementation of further infrastructure to support the CF architecture. Following the same line of \cite{Ganesan2020}, the {works in} \cite{Henriksson2020} and \cite{Chen2021} use other mathematical frameworks to introduce more efficient algorithms to obtain a GF protocol considering activity detection. Contrastingly, our work considers the design of a GB protocol for CF-mMIMO towards collision resolution. 

{In \cite{Ding2021}, the authors examine the distinctive features and benefits of CF-mMIMO to resolve GF ultra-reliable and low-latency (URLLC) communication issues. They shown that distinctive features of CF-mMIMO can be deployed to resolve preamble collision, suppressing multiuser interference in GF RA. Open issues and challenges in GF CF-mMIMO include: a) preamble-collision resolution, b) channel inference assisted resource allocation, and c) limited fronthaul. Preamble-collision resolution in GF URLLC could be achieved by coded RA and its variants in the context of HARQ retransmission schemes, relying on the use of multiple preambles over transmission time intervals (TTIs). Hence, by exploiting macro-diversity and signal spatial sparsity, CF-mMIMO is capable of resolving preamble collisions on the basis of a single TTI, which is beyond the capability of Ce-mMIMO \cite{Ding2021}.} This last work and \cite{Hu2021} relate to our {since they exploit} the augmented macro-diversity from the CF architecture to mitigate collisions between pilots. However, as far as we know, our work is the first to explore and analytically analyze the impact of the spatial separability tool due to the greater macro-diversity of a user-centric CF system in the design of GB RA protocols.

\subsection{Our Contributions}
In this paper, we extend the GB SUCRe protocol from \cite{Bjornson2017} for CF-mMIMO networks. This extension aims to fully exploit the user-centric perspective \cite{Demir2021}, that is, {we follow the principle that} \emph{not all APs necessarily need to serve all RA pilots and not even be operational on the RA phase as a whole}. On this basis, the contributions of this work are threefold.
\begin{itemize}
    \item The introduction of the concept of spatial separability is the core novelty of this work. Spatial separability is a tool that allows multiple colliding UEs to access the network simultaneously, even if the collision is not properly resolved. We analytically characterize the spatial separability potential of a CF-mMIMO system. 
    \item We propose two new GB-RA protocols for CF-mMIMO networks aided by the spatial separability principle: \textit{i)} the \textbf{baseline cell-free (BCF) protocol}, which just exploits the spatial separability concept to resolve collisions, and \textit{ii)} the \textbf{cell-free SUCRe (CF-SUCRe) protocol}, which combines SUCRe and spatial separability concepts, while preserves the \emph{decentralized way} of resolving collisions.
\item {To enable the SUCRe-based collision resolution in CF networks,} we introduce three new estimators for estimating the total uplink (UL) signal power from the colliding UEs.  We comprehensively evaluate these three estimators.
\end{itemize}
The salient features of the proposed CF-RA protocols include: First, we introduce the notions that: \textit{i}) a UE is only aware of a preferred set of APs and \textit{ii)} another preferred set of APs is allocated by the CF network to serve each RA pilot. 
Second, the difference between the preferred sets of APs of each UE and the preferred sets of APs servicing each pilot generates a particular macro-diversity in CF systems; we exploit this macro-diversity for collision resolution and refer to it as \emph{spatial separability}. Third, we design three new estimators taking into account that a UE only knows its preferred set of APs, which is different from the set of APs that service the RA pilot sent by it. In essence, both proposed GB RA protocols deploy the idea of spatial separability: the BCF just exploits the spatial separability concept to resolve collisions; while the CF-SUCRe combines the concepts of SUCRe and spatial separability to resolve collisions. Our main results reveal that the CF-SUCRe can perform as well as the BCF, although it is more energy efficient. This is because the combination of SUCRe and spatial separability allows to reduce the amount of APs operating in the RA admission steps without affecting the its performance substantially. Hence, we refer to the original SUCRe \cite{Bjornson2017} as the \textbf{cellular SUCRe (Ce-SUCRe) protocol}.

\subsection{Notation}
Let $\mathbb{R}_{+}$ denote the set of positive real numbers and $\mathbb{C}$ the set of complex numbers. Integer sets are denoted by calligraphic letters $\mathcal{A}$ with cardinality given by $|\mathcal{A}|$ and empty set $\emptyset$. Lower case boldface letters denote column vectors (\emph{e.g.}, $\mathbf{x}$), while uppercase boldface letters stand for matrices (\emph{e.g.}, $\mathbf{A}$). The identity matrix of size $N$ is $\mathbf{I}_N$, whereas $\mathbf{0}$ and $\mathbf{1}$ stands for a vector of zeros and ones, respectively. The Euclidean norm of an arbitrary vector $\mathbf{x}$ is $\lVert\mathbf{x}\rVert_2$, while $\lVert\mathbf{x}\rVert_1$ is its $l_1$-norm. The $\mathrm{argsort}(\cdot)$ function sorts and returns the indices of a vector in ascending order. The ceil function is $\lceil\cdot\rceil$. {The circularly-symmetric complex Gaussian distribution is denoted as $\mathcal{N}_{\mathbb{C}}(\mu,\sigma^2)$ with mean $\mu$ and variance $\sigma^2$, while ${\rm U}_{[a,b]}$ denotes the continuous uniform distribution in the range $[a, b]$.} Operators for probability and expectation are $\mathbb{P}\{\cdot\}$ and $\mathbb{E}\{\cdot\}$, respectively. In algorithm pseudo-codes, $\gets$ denotes assignment to a variable and $\mathrm{sum}(\cdot)$ defines a function that sums Boolean values, where "True" is 1 and "False" is 0. {The symbol \# means "number of".}

\section{System Model}\label{sec:systemmodel}
Suppose that a subset $\mathcal{K}\subset\mathcal{U}$ of inactive UEs requests access to a CF-mMIMO network comprised of $L$ APs at a given moment. The APs are equipped with $N$ antennas each and are indexed by $\mathcal{L}=\{1,2,\dots,L\}$. A central processing unit (CPU) coordinates the exchange of information between APs through fronthaul links. {In addition, we consider that APs and UEs are placed within a square area of $\ell^2$}. Fig. \ref{fig:system-model} illustrates the adopted system model. {We let} $P_a$ be the \emph{probability of access} that dictates whether an inactive UE $k\in\mathcal{U}$ tries to access the network or not, that is, whether $k$ is a member of $\mathcal{K}$.{\footnote{{An equal, independent probability of access $P_a$ for each UE covers a more general and unfavorable scenario: UEs have the same chance of being active, while UEs' activities are assumed uncorrelated.}}} Moreover, we assume that the RA phase occurs through the transmission of pilots. The pilot pool $\boldsymbol{\Phi}\in\mathbb{C}^{\tau_p\times\tau_p}=[\boldsymbol{\phi}_1,\boldsymbol{\phi}_2,\dots,\boldsymbol{\phi}_{\tau_p}]$ contains $\tau_p$ pilots indexed by $\mathcal{T}=\{1,2,\dots,\tau_p\}$. The pilots are: \emph{a) normalized} $\lVert\boldsymbol{\phi}_t\rVert_2^2=\tau_p,$ $\forall t\in\mathcal{T}$ and \emph{b) mutually orthogonal} $\boldsymbol{\phi}^{\htransp}_t\boldsymbol{\phi}_{t'}={0}, \ \forall t\neq t'$, such that $t,t'\in\mathcal{T}$. The pilot pool is shared by all $L$ APs and $|\mathcal{U}|$ inactive UEs with $\tau_p\ll|\mathcal{U}|$. Because of this, \emph{pilot collisions} happen whenever two or more inactive UEs choose simultaneously the same pilot to request access to the network.

\begin{figure}[!htbp] 
    \centering
    \includegraphics[trim=0mm 0 0mm 0, clip=true, width=.3\textwidth]{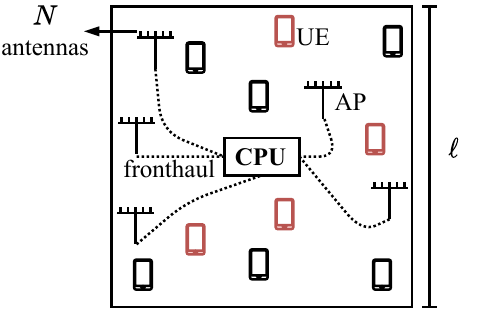}
    \caption{\small{Illustration of the CF-mMIMO system. Red UEs represent the portion $\mathcal{K}\subset\mathcal{U}$ of inactive UEs that wish to connect,} while black UEs remain idle. Fronthaul links are assumed to have unlimited capacity and provide error-free communication. {The proposed RA protocols depend on the information exchange between APs. The assumption of perfect cooperation between APs is valid for reliable wired links; otherwise, some performance degradation is expected.}}
    \label{fig:system-model}
\end{figure}

\subsection{Channel Model}
For $k\in\mathcal{K}$ and $l\in\mathcal{L}$, let $\mathbf{h}_{kl}\in\mathbb{C}^{N}$ denote the channel vector between the $k$-th UE and the $l$-th AP. {For tractability reasons}, we assume uncorrelated Rayleigh fading channels:
$$\mathbf{h}_{kl}\sim\mathcal{N}_{\mathbb{C}}(\mathbf{0},\beta_{kl}\mathbf{I}_N), \ \forall (k,l), \ k\in\mathcal{K}, \ l\in\mathcal{L}.$$
{This assumption considers a challenging environment with rich scattering due to movements of UEs and other objects. The results shown in the paper can easily be extended for the correlated Rayleigh fading model used in \cite{Demir2021}.}
The \emph{average channel gain} $\beta_{kl}$ is modeled according to
\begin{equation}
    \beta_{kl}\in\mathbb{R}_{+}=\Omega \cdot d^{-\zeta}_{kl},
    \label{eq:average-channel-gain}
\end{equation}
where $\Omega$ is a positive multiplicative power constant, $d_{kl}$ is the distance in meters between the $k$-th UE and the $l$-th AP, and $\zeta$ is the pathloss exponent.

\section{CF-SUCRe Protocol}\label{sec:cell-free-sucre}
The Ce-SUCRe protocol \cite{Bjornson2017} is comprised of {one preliminary step and} four main steps. {In this section,} we adapt these steps to a CF-mMIMO network {based on the user-centric perspective \cite{Demir2021}}, originating the CF-SUCRe protocol.
\vspace{-1mm}
\subsection{Step 0: Estimating Average Channel Gains}
In this preliminary step, the $k$-th UE estimates its set of average channel gains $\{\beta_{k1},\beta_{k2},\dots,\beta_{kL}\}$ relying on control beacons sent by the $L$ APs for $k\in\mathcal{U}$. For this to be possible, we assume that the control signaling of the APs are orthogonal to each other. {Below, we define the set of APs that a UE is capable of estimating.}
\vspace{-1mm}
{
\begin{definition}
    (Set of nearby APs: Influence region) For $k\in\mathcal{U}$, the $k$-th UE is only capable of perfectly estimating a channel gain $\beta_{kl}$ if and only if (iff) $q_l\beta_{kl}>\iota\cdot\sigma^2$ for $l\in\mathcal{L}$, where $q_l$ is the downlink (DL) transmit power of the $l$-th AP, $\iota\in\mathbb{R}_{+}$ is a multiplicative constant such that $\iota\geq1$, and $\sigma^2$ is the noise power. Then, we denote as $\mathcal{C}_k\subset\mathcal{L}$ the \textbf{set of nearby APs}, which is defined by $\mathcal{C}_k=\{l : q_l\beta_{kl}>\iota\cdot\sigma^2, \ \forall l\in\mathcal{L}\}$. {Moreover, we denote as $\check{\mathcal{C}}_k$ the so-called \textbf{natural set of nearby APs} obtained when $\iota=1$, which represents the largest number of APs that the $k$-th UE can know.}
\end{definition}
}

{
The definition above is based on the assumption that a UE is capable of detecting and perfectly estimating an average channel gain $\beta_{kl}$ with a power at least greater than the noise power $\sigma^2$, which is reasonable but also optimistic assumption. Therefore, the parameter $\iota$ is introduced to make this assumption stricter and more realistic. Note that UEs also need to estimate $\sigma^2$.
}

The physical interpretation of the set $\mathcal{C}_k$ is that it represents the APs located close to the $k$-th UE. To see this, by using \eqref{eq:average-channel-gain}, we can write the inequality $q_l\beta_{kl}>\iota\cdot\sigma^2$ as a function of the distance $d_{kl}$, yielding in
\begin{equation}
    d_{kl}<\left(\dfrac{1}{\iota}\cdot\dfrac{\Omega \cdot q_l}{\sigma^2}\right)^{\frac{1}{\zeta}}.
\end{equation}
This means that $\mathcal{C}_k$ can be geometrically interpreted as the set of APs whose APs have distances relative to the $k$-th UE less than the \emph{limit distance} or \emph{limit radius} given by
\begin{equation}
    d^{\text{lim}}=d^{\text{lim}}_{k}=\left(\dfrac{1}{\iota}\cdot\dfrac{\Omega \cdot q_l}{\sigma^2}\right)^{\frac{1}{\zeta}}.
    \label{eq:limit-distance}
\end{equation}
Note that we can drop the subscript $k$ of the limit radius. Further, the limit radius is as large as possible when $\iota=1$. Consequently, the larger the $\iota$, the smaller are $d^{\text{lim}}$ and $|\mathcal{C}_k|$. In the worst case, we assume that $|\mathcal{C}_k|\geq1$ APs irrespective of the value of $\iota$, meaning that at least one AP has to be known so that an inactive UE can try to access the network. Particularly, the circular region around a UE with radius $d^{\text{lim}}$ can be interpreted as the region which contains the APs that most influence a UE's communication, namely, the \emph{UE's influence region} \cite{Demir2021}. 

\subsection{Step 1: Pilot Transmission and Pilot Activity}
At a given instant, each inactive UE from $\mathcal{K}$ selects a pilot $\boldsymbol{\phi}_t\in\mathbb{C}^{\tau_p}$ at random from the pilot pool $\boldsymbol{\Phi}$, which is denoted as $c(k)\in\mathcal{T}$ for $k\in\mathcal{K}$. After the $\lvert\mathcal{K}\rvert$ UEs transmit their chosen pilots in a broadcast form, the $l$-th AP receives:
\begin{equation}
    \mathbf{Y}_{l}\in\mathbb{C}^{N\times\tau_p} = \sum_{k\in\mathcal{K}} \sqrt{p_k} \mathbf{h}_{kl} \boldsymbol{\phi}^{\transp}_{c(k)} + \mathbf{N}_l,
    \label{eq:pilot-tx}
\end{equation}
where $p_k$ is the UL transmit power of the $k$-th UE and $\mathbf{N}_l\in\mathbb{C}^{N\times\tau_p}$ is the receiver noise matrix with i.i.d. elements distributed as $\mathcal{N}_{\mathbb{C}}({0},\sigma^2)$. Then, the $l$-th AP correlates its received signal with each pilot available in the pilot pool $\boldsymbol{\Phi}$. The correlation with the $t$-th pilot yields in
\begin{equation}
    \mathbf{y}_{lt}\in\mathbb{C}^{N}=\mathbf{Y}_{l} \dfrac{\boldsymbol{\phi}_t^{*}}{\lVert\boldsymbol{\phi}_t\rVert_2}=\sum_{i\in\mathcal{S}_t} \sqrt{p_i\tau_p}\mathbf{h}_{il}+\mathbf{n}_{lt},
    \label{eq:UL-correlated-signal}
\end{equation}
where $\mathbf{n}_{lt}\in\mathbb{C}^{N}\sim\mathcal{N}_{\mathbb{C}}(\mathbf{0},\sigma^2\mathbf{I}_{N})$ is the effective receiver noise vector. The {\emph{set of colliding UEs} is denoted as} $\mathcal{S}_t\subset\mathcal{K}$ {for} $t\in\mathcal{T}$. Hence, we have that $\lvert\mathcal{S}_t\rvert>1$ UEs if a collision occurs.

\subsubsection{{Pilot activity}} On one hand, APs are unable to resolve collisions using the correlated signal $\mathbf{y}_{lt}$, since they do not have any information about the $|\mathcal{K}|$ UEs. On the other hand, a utility of $\mathbf{y}_{lt}$ is to detect which of the pilots are being used or are active \cite{Bjornson2017}. Locally, the $l$-th AP can obtain:
{
\begin{align}
    \dfrac{1}{N}\lVert\mathbf{y}_{lt}\rVert^2_2&=\frac{1}{N}\sum_{n'=1}^{N}\Bigg[\left(\sum_{i\in\mathcal{S}_t} \sqrt{p_i\tau_p}\lvert{h}^{(n')}_{il}\rvert\right)^2+2\left(\sum_{i\in\mathcal{S}_t} \sqrt{p_i\tau_p}\lvert{h}^{(n')}_{il}\rvert\right)\nonumber\\
    &\cdot\left(\lvert n^{n'}_{lt} \rvert\right) + \left(\lvert n^{n'}_{lt} \rvert\right)^2\Bigg]
    \,\,   \xrightarrow{N\rightarrow\infty} \,\, \underbrace{\sum_{i\in\mathcal{S}_t}p_i\tau_p\beta_{il}}_{\alpha_{lt}} \, + \, \sigma^2,
    \label{eq:check-pilot-t}
\end{align}
where the approximation relies on the law of large numbers and the facts that: channels between different UEs are uncorrelated and noise is independent.\footnote{We assume that the readers are familiar with channel hardening and favorable propagation concepts. For a more comprehensive definition, we point out the interested reader to \cite{Demir2021}.} We let $\alpha_{lt}=\sum_{i\in\mathcal{S}_t}p_i\tau_p\beta_{il}$ denote the \emph{UL signal power of colliding UEs} for $l\in\mathcal{L}$. Moreover, we say that the $t$-th pilot is an \emph{active pilot} at the $l$-th AP if $\frac{1}{N}\lVert\mathbf{y}_{lt}\rVert^2_2>\sigma^2$.} Altogether, the $l$-th AP knows
\begin{equation}
    \tilde{\mathbf{a}}_l\in\mathbb{R}_{+}^{\tau_p}=\dfrac{1}{N}\left[{\lVert\mathbf{y}_{l1}\rVert^2_2},{\lVert\mathbf{y}_{l2}\rVert^2_2},\dots,{\lVert\mathbf{y}_{l\tau_p}\rVert^2_2}\right]^\transp,
\end{equation}
where $\tilde{\mathbf{a}}_l$ measures the pilot activity at the $l$-th AP for $l\in\mathcal{L}$. Importantly, we can use $\tilde{\mathbf{a}}_l$ to design a user-centric {solution for the CF RA.} This can be done by restricting the amount of pilots each AP {serves} based on the relative powers stored in $\tilde{\mathbf{a}}_l$. To do this, {we consider that} all $L$ APs send their $\tilde{\mathbf{a}}_l$'s to the CPU. Since $\tau_p$ is constant in the order of tens, the communication of $\tau_p$-length vectors $\tilde{\mathbf{a}}_l$ through fronthaul links is computationally scalable.\footnote{{We adopt the definition of computational scability from \cite{Demir2021}.}} Then, the CPU knows
\begin{equation}
    \tilde{\mathbf{A}}\in\mathbb{R}_{+}^{\tau_p\times L}=[\tilde{\mathbf{a}}_1, \tilde{\mathbf{a}}_2, \dots, \tilde{\mathbf{a}}_L].
    \label{eq:matrix-A-tilde}
\end{equation}
The CPU can now define which APs are the most suitable to serve the $t$-th pilot based on $\tilde{\mathbf{A}}$. To do so, we suggest the following heuristic procedure: \textbf{1)} Eliminate all \emph{irrelevant} entries of $\tilde{\mathbf{A}}$: set all entries in which $\frac{1}{N}\lVert\mathbf{y}_{lt}\rVert^2_2\leq\sigma^2$ to zero, $\forall l\in\mathcal{L}, \ \forall t\in\mathcal{T}$; \textbf{2)} Define $L^{\max}$ as the \emph{number of APs that serves each pilot}. For simplicity, the integer $1 \leq L^{\max} \leq L$ is {reasonably} assumed to be the same for every pilot $t\in\mathcal{T}$, {since pilots are equally likely to be chosen by a UE.}\footnote{{A more adequate strategy would be to allocate more APs to the highest valued $\tilde{\mathbf{A}}$ inputs, giving more emphasis to pilots which have more collisions. However, we restricted ourselves to the simplest case for the sake of mathematical tractability. Future works may collaborate with better ways to select the more general parameter $L^{\max}_t$ by better studying properties of $\tilde{\mathbf{A}}$.}} Then, we define the following set of APs.

{
    \begin{definition}
    (Set of pilot-serving APs) For each $t$-th row of $\tilde{\mathbf{A}}$, the indices of the $L^{\max}$ highest (non-zero) entries leads to a set $\mathcal{P}_t\subset\mathcal{L}, \,\forall t\in\mathcal{T}$. The \textbf{set of pilot-serving APs}, $\mathcal{P}_t$, indicates the ablest APs chosen to serve the $t$-th pilot, where $|\mathcal{P}_t|\leq L^{\max}$ and the pilot is {inactive} for $|\mathcal{P}_t|=0$ or $\mathcal{P}_t=\emptyset$.
    \end{definition}
}

{After defining the $\mathcal{P}_t$'s,} the CPU {informs} the $L$ APs of {its decision}. For $l\in\mathcal{L}$, we let $\mathcal{T}_l\subset\mathcal{T}$ denote the set of pilots served by the $l$-th AP{, whose construction is} $\mathcal{T}_l=\{t : l\in\mathcal{P}_t, \ \forall t \in\mathcal{T}\}$. {Note that} the $l$-th AP can still serve more than one pilot or $0\leq|\mathcal{T}_l|\leq\tau_p$, where $|\mathcal{T}_l|=0$ denotes the \emph{inoperative} case where the AP does not participate in the RA phase. {The main benefit of limiting the number $L^{\max}$ of APs that are serving each pilot is to mitigate the computational complexity per AP and the inter-AP interference that could arise from multiple non-coherent transmissions\footnote{{Next, we will assume that APs transmit only using local knowledge via the maximum-ratio scheme. Another approach would be to use more robust DL transmission methods, but at the cost of increasing computational complexity.}} of APs. Thus, the CF-SUCRe protocol can become more energy efficient, since not all $L$ APs must necessarily operate in the RA phase.}

The physical meaning of $\mathcal{P}_t$ is that it contains the indices of the $L^{\max}$ APs that are closer to the colliding UEs in $\mathcal{S}_t$. To see this, as before, we seek for a geometric interpretation of $\mathcal{P}_t$ with respect to the distance $d_{il}$, for $i\in\mathcal{S}_t$ and $l\in\mathcal{L}$. From \eqref{eq:average-channel-gain}, \eqref{eq:check-pilot-t}, and the inequality $\frac{1}{N}\lVert\mathbf{y}_{lt}\rVert^2_2>\sigma^2$, the set $\mathcal{P}_t$ contains the AP indices of the last $L^{\max}$ vector entries:
\begin{equation}
    \mathrm{argsort}\left(\left[\sum_{i\in\mathcal{S}_t}p_i d^{-\zeta}_{i1},\,\, \sum_{i\in\mathcal{S}_t}p_i d^{-\zeta}_{i2}, \,\,\dots\,\, , \,\,\sum_{i\in\mathcal{S}_t}p_i d^{-\zeta}_{iL}\right]^\transp\right),
    \label{eq:geometric-interpretation-pt}
\end{equation}
where the closer the $l$-th AP is to the UEs in $\mathcal{S}_t$, the greater the sum $\sum_{i\in\mathcal{S}_t}p_i d^{-\zeta}_{il}$ for $l\in\mathcal{L}$. {The above expression reveals that the UL transmit power $p_i$ can also change considerably the construction of $\mathcal{P}_t$'s.} But, if all UEs transmit with equal power $p_1=\dots=p_{|\mathcal{S}_t|}=p$, the dependency on $p$ is gone. In this special case, the construction of $\mathcal{P}_t$ is only controlled by the parameter $L^{\max}$, since the other variables are defined by the geometry and the environment of a scenario of interest.

\subsubsection{{Obtaining sets of nearby and pilot-serving APs}} Fig. \ref{fig:illustration-ccal-pcal} illustrates the process of obtaining sets $\check{\mathcal{C}}_k$ and $\mathcal{P}_t$ for $L^{\max}=10$ APs. For illustrative reasons, we consider that the asymptotic approximation in \eqref{eq:check-pilot-t} is obtained. There are $\tau_p=3$ active pilots, where for each pilot a collision of size $|\mathcal{S}_t|=2$ UEs occurs. This gives rise to the colliding sets: $\mathcal{S}_1$ '$\blacksquare$', $\mathcal{S}_2$ '$\blacktriangleleft$', and $\mathcal{S}_3$ '$\blacktriangleright$'. These markers denote the positions of the UEs. The {natural-nearby} APs that comprise $\check{\mathcal{C}}_k$ of each UE $k\in\mathcal{S}_t$ are the ones within the UE's influence region delimited by the colored circles of radius $d^{\text{lim}}$, which is calculated according to \eqref{eq:limit-distance}. The {sets} of {pilot-serving} APs $\mathcal{P}_1,\mathcal{P}_2,\mathcal{P}_3$ are also differentiated by markers.

\begin{figure}[!htbp]
    \centering
    \vspace{-1mm}
    \includegraphics[width=.4\textwidth]{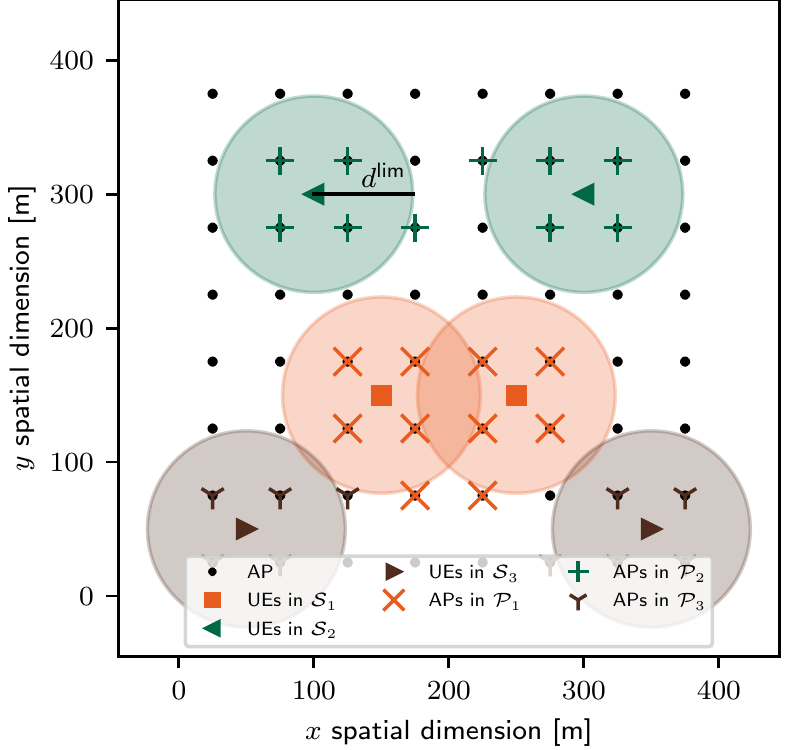}
    \vspace{-1mm}
    \caption{\small Construction of the sets $\check{\mathcal{C}}_k$ and $\mathcal{P}_t$ for $L^{\max}=10$ APs. A total of $L=64$ APs are disposed in an 8 $\times$ 8 square grid for $\ell=400$ m. Each color represents a pilot $t\in\mathcal{T}=\{1,2,3\}$. Each pilot has $|\mathcal{S}_t|=2$ UEs colliding in different spatial arrangements: $\mathcal{S}_1$ '$\blacksquare$', $\mathcal{S}_2$ '$\blacktriangleleft$', and $\mathcal{S}_3$ '$\blacktriangleright$'. The colored circles with radius $d^{\text{lim}}\approx73.30$ m mark the UE's influence region. The APs '$\bullet$' within of the colored circles are in $\check{\mathcal{C}}_k$ for $k\in\mathcal{S}_t$. APs with different markers superimposed over '$\bullet$' indicates the construction of $\mathcal{P}_1$, $\mathcal{P}_2$, and $\mathcal{P}_3$. Fixed parameters are: $\Omega=-30.5$ dB, $\zeta=3.67$, $\sigma^2=-94$ dBm, $\tau_p=3$ pilots, $p=100$ mW, and $q_l={\frac{200}{64}}$ mW.}
    \label{fig:illustration-ccal-pcal}
    \vspace{-1mm}
\end{figure}

{From Fig. \ref{fig:illustration-ccal-pcal}, one can observe that $\mathcal{P}_t$ almost match $\cup_{i\in\mathcal{S}_t}\check{\mathcal{C}}_i$ for a proper choice of $L^{\max}$. As a result, we state the following remark.
}

{
\begin{remark}\label{remark:similarity}
    (Similarity between set of nearby APs, $\mathcal{C}_k$, and set of pilot-serving APs, $\mathcal{P}_t$) For $i\in\mathcal{S}_t$, due to the {different construction numerologies}, the sets $\mathcal{C}_i$ (nearby APs) and $\mathcal{P}_t$ (pilot-serving APs) can be very different. It is reasonable to expect, however, that $\mathcal{C}_i\cap\mathcal{P}_t\neq\emptyset$ for most of the colliding UEs. The intuition behind this expectation is because $\mathcal{P}_t$ depends on $\sum_{i\in\mathcal{S}_t}p_i\tau_p\beta_{il}$, while $\mathcal{C}_i$ on the individual $q_l\beta_{il}$'s. Hence, both quantities are positively correlated to the magnitude of average channel gains $\beta_{il}$ for $i\in\mathcal{S}_t$ and $l\in\mathcal{L}$. Both sets are equal in the very special case where $|\mathcal{S}_t|=1$, $p=q_l$, and a proper choice of $L^{\max}$.
\end{remark}
}

\subsection{Step 2: Precoded RA Response}
Let $\mathcal{P}=\cup_{t=1}^{\tau_p}\mathcal{P}_t$ denote all the APs that serve at least one pilot. {For $l\in\mathcal{P}$, the $l$-th} AP sends a precoded DL pilot signal employing a multi-cast maximum-ratio (MR) transmission:
\begin{equation}
    \mathbf{V}_l\in\mathbb{C}^{N\times\tau_p}=\sqrt{q_l}\sum_{t\in\mathcal{T}_l} \dfrac{\mathbf{y}_{lt}}{\lVert\mathbf{y}_{lt}\rVert_2}\boldsymbol{\phi}^{\transp}_t.
    \label{eq:DL-precoding}
\end{equation}
Note that the MR precoding vector ${\mathbf{y}_{lt}}/{\lVert\mathbf{y}_{lt}\rVert}_2$ spatially directs the $t$-th pilot towards the colliding UEs in $\mathcal{S}_t$. For $k\in\mathcal{S}_t$, the $k$-th UE receives
\begin{equation}
    \mathbf{z}^{\transp}_{k}\in\mathbb{C}^{\tau_p}=\sum_{l\in\mathcal{P}}\mathbf{h}^{\htransp}_{kl}\mathbf{V}_l+\boldsymbol{\eta}^{\transp}_k,
\end{equation}
where $\boldsymbol{\eta}_k\sim\mathcal{N}_{\mathbb{C}}(\mathbf{0},\sigma^2\mathbf{I}_{\tau_p})$ is the receiver noise. Then, the $k$-th UE correlates $\mathbf{z}^{\transp}_{k}$ with the pilot $\boldsymbol{\phi}_t$ used in Step 1, yielding in:
\begin{align}
    z_k \in \mathbb{C} & = \mathbf{z}^{\transp}_{k} \dfrac{\boldsymbol{\phi}_t^{*}}{\lVert\boldsymbol{\phi}_t\rVert_2} = \sum_{l\in\mathcal{P}_t} \sqrt{q_l\tau_p}\mathbf{h}^{\htransp}_{kl}\dfrac{\mathbf{y}_{lt}}{\lVert\mathbf{y}_{lt}\rVert_2}+\eta_{kt},\nonumber\\
    & \stackrel{(a)}{=} \underbrace{\vphantom{\left(\sum_{l\in\mathcal{P}_t} \sqrt{\dfrac{q_lp_i\tau^2_p}{\lVert\mathbf{y}_{lt}\rVert^2_2}}{\mathbf{h}_{kl}^\htransp\mathbf{h}_{il}}\right)}\sum_{l\in\mathcal{P}_t} \sqrt{\dfrac{q_lp_k\tau^2_p}{\lVert\mathbf{y}_{lt}\rVert^2_2}}{\lVert\mathbf{h}_{kl}\rVert^2_2}}_{\text{effective channel}}+\sum_{i\in\mathcal{S}_t\setminus\{k\}}\underbrace{\left(\sum_{l\in\mathcal{P}_t} \sqrt{\dfrac{q_lp_i\tau^2_p}{\lVert\mathbf{y}_{lt}\rVert^2_2}}{\mathbf{h}_{kl}^\htransp\mathbf{h}_{il}}\right)}_{\text{effective interfering channel}} \nonumber \\
    & +\underbrace{\sum_{l\in\mathcal{P}_t}\sqrt{\dfrac{q_l\tau_p}{\lVert\mathbf{y}_{lt}\rVert^2_2}}{\mathbf{h}_{kl}^\htransp\mathbf{n}_{lt}}+\eta_{kt}}_{\text{noise}},
    \label{eq:DL-correlated-signal}
\end{align}
where in $(a)$ we used \eqref{eq:UL-correlated-signal} and ${\eta}_{kt}\sim\mathcal{N}_{\mathbb{C}}({0},\sigma^2)$ is the effective receiver noise. From the channel hardening definition {\cite[Eq. (2.23)]{Demir2021}}, the effective channel above satisfies
\begin{align}
     \dfrac{1}{\sqrt{N}}\sum_{l\in\mathcal{P}_t} \sqrt{\dfrac{q_lp_k\tau^2_p}{\lVert\mathbf{y}_{lt}\rVert^2_2}}{\lVert\mathbf{h}_{kl}\rVert^2_2} - \nonumber \\
     -\dfrac{1}{\sqrt{N}}\left(\mathbb{E}\left\{\sum_{l\in\mathcal{P}_t}\sqrt{\dfrac{q_lp_k\tau^2_p}{\lVert\mathbf{y}_{lt}\rVert^2_2}}{\lVert\mathbf{h}_{kl}\rVert^2_2}\right\}\right) & \xrightarrow[]{}0, \text{ as } N \xrightarrow[]{}\infty.
    \label{eq:asymptotic-z}   
\end{align}
{This means that the difference between the instantaneous effective channel and its mean value converges strongly to zero when the number of antennas per AP, $N$, is very large. Note that the asymptotic analysis in \eqref{eq:check-pilot-t} was considered for $\mathbf{y}_{lt}$. Hence, if favorable propagation (asymptotically orthogonality of interfering channels) also holds, we can approximate \eqref{eq:DL-correlated-signal} using only the effective parts; the noise part also converges to zero for large $N$ since channels and noise are uncorrelated. By using the almost sure convergence of \eqref{eq:asymptotic-z} and solving the expectation, we get for the effective channel term in \eqref{eq:DL-correlated-signal}:}
\begin{equation}
    \dfrac{\Re(z_k)}{\sqrt{N}}\approx \tilde{z}_k = \sum_{l\in\mathcal{P}_t}\left(\dfrac{\sqrt{q_lp_k}\tau_p\beta_{kl}}{\sqrt{\alpha_{lt}+\sigma^2}}\right),
    \label{eq:approx}
\end{equation}
where $\Re(z_k)$ stands for the real part of $z_k$ and $\alpha_{lt}$ was defined in \eqref{eq:check-pilot-t}. The {approximation} $\tilde{z}_k$ in \eqref{eq:approx} is handy for the {$k$-th} UE to provide a way to compare its \emph{own total UL signal power}:
\begin{equation}
    \gamma_k\in\mathbb{R}_{+}=\sum_{l'\in\mathcal{C}_k}p_k\tau_p\beta_{kl'}
    \label{eq:gamma_k}
\end{equation}
with the \emph{total UL signal power of colliding UEs} in $\mathcal{S}_t$:
\begin{equation}
    \alpha_{t}\in\mathbb{R}_{+}=\sum_{l\in\mathcal{P}_t}\alpha_{lt}=\sum_{i\in\mathcal{S}_t}\left(\sum_{l\in\mathcal{P}_t}p_i\tau_p\beta_{il}\right),
    \label{eq:alpha_t}
\end{equation}
where we used the definition of $\alpha_{lt}$ from \eqref{eq:check-pilot-t}. {The $k$-th} UE knows $\gamma_k$, whereas $\alpha_t$ is unknown to it. The key idea behind SUCRe introduced in \cite{Bjornson2017} is to resolve a collision by having the strongest UE among those colliding re-transmitting the $t$-th pilot. In order for the $k$-th UE to be able to determine if it is the strongest UE, it needs to estimate $\alpha_t$ having the correlated received signal $z_k$ in \eqref{eq:DL-correlated-signal} as information. {Then, the $k$-th UE can make a decision based on comparing $\alpha_{lt}$ with $\gamma_k$.} We will denote as $\hat{\alpha}_{t,k}$ \emph{any estimation of $\alpha_t$} made by the $k$-th UE. In Section \ref{sec:estimators}, we discuss different estimators for $\hat{\alpha}_{t,k}$.

\subsection{Step 3: Contention Resolution \& Pilot Repetition}
In this step, the pilot collisions are solved through a distributed process called as \emph{contention resolution}. Based on SUCRe \cite{Bjornson2017} and the asymptotic analysis above, each UE applies the rule below to decide if it is the contention winner {\cite{Bjornson2017}}\footnote{{Different from \cite{Bjornson2017} {and for tractability purpose in the new CF system context}, we consider no bias parameters in the decision rule.}}:
\begin{equation}
    \begin{aligned}
        {R}_k: & \ \gamma_k > \dfrac{{\hat{\alpha}}_{t,k}}{2} \ \text{(repeat)},\\
        {I}_k: & \ \gamma_k \leq \dfrac{{\hat{\alpha}}_{t,k}}{2} \ \text{(inactive)},
    \end{aligned} 
    \label{eq:decision}
\end{equation}
{for $k\in\mathcal{K}$ and} where ${\hat{\alpha}}_{t,k}$ denotes any estimation of $\alpha_t$ made by {the $k$-th} UE {which selected the $t$-th pilot, for $t\in\mathcal{T}$}. This \emph{distributed decision rule} reads as follows: {the $k$}-th UE re-transmits pilot signal $\boldsymbol{\phi}_t$ {if it claims itself as the strongest, where the notion of strength is related to $\frac{{\hat{\alpha}}_{t,k}}{2}$ \cite{Bjornson2017},} and hypothesis $R_k$ is therefore true; otherwise, {the $k$-th} UE concludes that hypothesis $I_k$ is true, {deciding} to pull out of and postpone the access attempt. Step 3 finishes with the re-transmission of the same pilots sent in Step 1 by the UEs that have decided for $R_k$. {We denote as $\mathcal{W}_t\subseteq\mathcal{S}_t$ the \emph{set of winning UEs} that decided to re-transmit the $t$-th pilot.} {The transmission in Step 3 also contains the identity of the UE and a request for payload transmission, resembling the connection request in legacy protocols \cite{Popovski2020,Bjornson2017}.}

\subsection{Step 4: Allocation of Dedicated Data Payload Pilots}
The $L$ APs receive the pilots repeated by the winning contention UEs {in $\mathcal{W}_t,\,\forall t\in\mathcal{T}$}. {Clearly,} the $l$-th AP only needs to check the re-transmitted pilots that are served by it, which are specified by $\mathcal{T}_l\subset\mathcal{T}$ for $l\in\mathcal{L}$. {In \cite{Bjornson2017}, a winning UE only successfully accesses the network if it retransmits the pilot alone, meaning that $|\mathcal{W}_t|=1$ for an access to be considered successful in the Ce-SUCRe. Herein, we demonstrate that it is possible to solve collisions even if $|\mathcal{W}_t|>1$ through the concept of \emph{spatial separability}, which rises from a spatial-based reuse of pilots and the adoption of a user-centric perspective (related to the definition of $\mathcal{C}_k$ and $\mathcal{P}_t$) {to design the network}. This concept is presented in more depth in the next section. If the access of a UE is considered successful, the APs that serve it can estimate its channel and successfully decode the messages used to finally establish its network connection.}

\section{{Spatial Separability}}\label{sec:spatial-separability}
{In this section, we present the concept of \emph{spatial separability}. Spatial separability resolves the access contention even if multiple UEs claim themselves as contentions winners after Step 3 through a spatial-based reuse of pilots and the adoption of the user-centric design perspective. We define it as follows.} 

{
\begin{definition}\label{def:3}
(Spatial separability) {For the $t$-th pilot, consider the $k$-th winning UE for $k\in\mathcal{W}_t$ and $t\in\mathcal{T}$. Let $\underline{\check{\mathcal{C}}_k}=\cup_{i\in\mathcal{W}_t\setminus\{k\}}\check{\mathcal{C}}_i$ denote the set of nearby APs closest to the other winning UEs. The $k$-th UE is \textbf{spatially separable} if the following conditions are simultaneously satisfied:
\begin{align}
\textbf{(a)}&&\mathcal{P}_t\cap\check{\mathcal{C}}_k \neq \emptyset\label{eq:spatial-separability:cond1}\\
\textbf{(b)}&&(\mathcal{P}_t\cap\check{\mathcal{C}}_k)\setminus(\mathcal{P}_t\cap\underline{\check{\mathcal{C}}_k}) \neq \emptyset\label{eq:spatial-separability:cond2}
\end{align}
where \textbf{(a)} means that at least one pilot-serving AP is within the influence region of the $k$-th UE and \textbf{(b)} indicates that there is at least one pilot-serving AP that is \textbf{exclusively} within the influence region of the $k$-th UE. In this way, the pilot re-transmitted by the $k$-th UE in Step 3 together with a connection request message will be successfully decoded by the APs that are exclusively closest to it, guaranteeing its access.}
\end{definition}
}

{
Condition \textbf{\textit{(b)}} in \eqref{eq:spatial-separability:cond2} is reasonable because the effective power of a winning UE $i\in\mathcal{W}_t\setminus\{k\}$ at the border of its influence region will decay on average almost $\beta^{\lim}\approx-100$ dB for: $\Omega=-30.5$ dB, $\zeta=3.67$, and $d^{\lim}=73.30$ m.\footnote{{A better evaluation of condition \textbf{\textit{(b)}} is let for future work, since a more refined definition of it may consider other power-related metrics.}} We then give the following example.  
} 

\vspace{2mm}
\noindent\textbf{Example}. \emph{{(Two winning UEs)} Let us assume the case of two winning UEs in $\mathcal{W}_t=\{1,2\}$. Given a suitable choice of $L^{\max}$, the winning UEs are served by the set $\mathcal{P}_t$ of pilot-serving APs. By checking $(\mathcal{P}_t\cap\check{\mathcal{C}}_1)\setminus(\mathcal{P}_t\cap\check{\mathcal{C}}_2)$ and vice-versa, it is possible to see if some of the APs in $\mathcal{P}_t$ are only close to one of the two colliding UEs. If that is the case, it is reasonable to assume that those APs that are only close to one of the UEs are still able to spatially separate that UE, seeing that the interference from the re-transmission of the other UE is low due to increased distance. Fig. \ref{fig:separability} illustrates the {discussed \textbf{spatial separability} concept.} Hence, when two colliding UEs declare themselves winners in Step 3, there can be: the acceptance of both, the acceptance of one of the two, or the acceptance of neither, depending on the subsets $\check{\mathcal{C}}_i$ and $\mathcal{P}_t$ for $i\in\mathcal{W}_t$.}

\begin{figure}[htp]
    \centering
    \vspace{-3mm}
    \includegraphics[width=.50\textwidth]{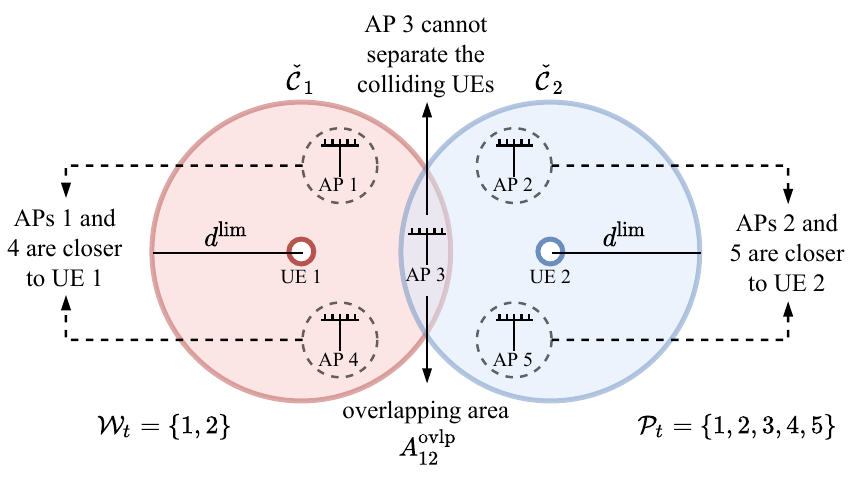}
    \vspace{-3mm}
    \caption{\small Illustration of the concept of spatial separability when considering a set of two colliding UEs {$\mathcal{W}_t=\{1,2\}$} served by five APs in $\mathcal{P}_t=\{1,2,3,4,5\}$.}
    \label{fig:separability}
\end{figure}

\subsection{{Analyzing the Spatial Separability}}\label{subsec:analysis}
The potential of spatial separability is analyzed through the evaluation of the veracity of conditions \textit{\textbf{(a)}} and \textit{\textbf{(b)}} in \eqref{eq:spatial-separability:cond1} and \eqref{eq:spatial-separability:cond2}, respectively. In this part, we assume that $\mathcal{W}_t=\mathcal{S}_t$ so that the analysis is independent of both the estimate $\hat{\alpha}_{t,k}$ and decision in \eqref{eq:decision} and is, consequently, more generalist.

We denote by $(x_k,y_k)$ the pair of coordinates that describes the position of the $k$-th UE, for $k\in\mathcal{S}_t$. We assume that $x_k,y_k\sim \rm U_{[0,\ell]}$, where $x_k$ and $y_k$ are independent and $\ell$ is the square length (see Fig. \ref{fig:system-model}). The distance between two winning UEs is then $d_{ij}=((x_i-x_j)^2+(y_i-y_j)^2)^{1/2}$, for $i,j\in\mathcal{S}_t$. Thereby, the cumulative distribution function (CDF) of $d_{ij}$ is\footnote{{The steps to obtain the CDF of the distance involves: a) triangular distribution $x_i-x_j$, b) another triangular distribution $|x_i-x_j|$, c) evaluate distance $d_{ij}$ distribution considering the previous steps.}}
\begin{equation}
F_{d_{ij}}(d) = \frac{2}{\ell^4}
\begin{cases}
g(d)-g(0),&\text{for } 0\leq d< \ell \\
2g(\ell)-g\left(\sqrt{d^2-\ell^2}\right)-g(0),&\text{for } \ell\leq d\leq \sqrt{2}\ell
\end{cases},
\label{eq:cdf-distance}
\end{equation}
where function $g:\mathbb{R}_{+}\mapsto\mathbb{R}_{+}$ is defined as
\begin{align*}
    g(z)&=\dfrac{\ell}{3}\sqrt{d^2-z^2}(z(3\ell-2z)+2d^2)+\ell^2d^2\arctan\left(\dfrac{z}{\sqrt{d^2-z^2}}\right)\\
    &-\ell d^2 z +\dfrac{\ell z^3}{3}+\dfrac{\ell^2z^2}{2}-\dfrac{z^4}{4}.
\end{align*}

{Let $A_{k}^{\text{dom}}$ denote the \emph{dominant area} in which the signal of the $k$-th UE is dominant (\emph{e.g.,} the remaining area of UE 1 without the overlapping region in Fig. \ref{fig:separability}). We approximate the calculation of such area through the following expression:
\begin{equation}
A_{k}^{\text{dom}}= A_{k} - \mathbb{E}\left\{\sum_{i\in\mathcal{S}_t, i \neq k}F_{d_{ki}}(2d^{\lim})A^{\text{ovlp}}_{ki}\right\},
\label{eq:adom}
\end{equation}
where $A_{k}=\pi(d^{\lim})^2$ 
is the influence region of the $k$-th UE given the limit radius defined in \eqref{eq:limit-distance}. The overlap between the influence region of two UEs only occurs if the distance between the UEs is less or equal than two limit radius $2d^{\lim}$ (see Fig. \ref{fig:illustration-ccal-pcal}). We call two winning UEs {\emph{correspondents}} if their areas overlap. Therefore, the probability of two UEs being {correspondents} is $F_{d_{ki}}(2d^{\lim})=\mathbb{P}\{d_{ki}\leq2d^{\lim}\}$. Finally, the area $A^{\text{ovlp}}_{ki}$ stands for the \emph{overlapping area} between two circular influence regions of same radius $d^{\lim}$.
}

{
The expression in \eqref{eq:adom} only considers first-order overlaps, meaning that overlaps are disjoint, that is, we have not considered overlaps between overlaps. This is reasonable based on the fact that the probability of two UEs requesting access being {correspondents} \emph{and} simultaneously choosing the same pilot results in $\frac{1}{\tau_p} F_{d_{ki}}(2d^{\lim})\approx0.024$ for $\tau_p=5$, $d^{\lim}\approx73.30$, and $\ell=400$. This becomes even less likely to occur if the effect of the decision criterion in \eqref{eq:decision} is considered. Thus, Eq. \eqref{eq:adom} can be further approximated as $A_{k}^{\text{dom}}= A_{k} - F_{d_{ki}}(2d^{\lim}){\max(\overline{\lvert{\mathcal{S}}_t\rvert}-1,\, 0)} \mathbb{E}\{A^{\text{ovlp}}_{ki}\}$, where $\overline{\lvert{\mathcal{S}}_t\rvert}=({\lvert \mathcal{U}\rvert P_a})/{\tau_p}$ is the average collision size. Our goal is to compute the \emph{expected overlapping area} $\bar{A}^{\text{ovlp}}=\mathbb{E}\{A^{\text{ovlp}}_{ki}\}$.
}

{
The overlapping area between two similar circles is:
\begin{equation}
    A_{ki}^{\text{ovlp}} = \underbrace{2(d^{\lim})^2\arccos\left(\frac{d_{ki}}{2d^{\lim}}\right)}_{h_1(d_{ki})} - \underbrace{\vphantom{\left(\sqrt{\frac{d_{ki}}{2d^{\lim}}}\right)}\frac{d_{ki}}{2}\sqrt{4(d^{\lim})^2 - d_{ki}^2}}_{h_2(d_{ki})},
\end{equation}
where $A^{\text{ovlp}}_{ki}$ is a random variable due to $d_{ki}$ (see Fig. \ref{fig:separability}). Since we are interested in its expected value, we divide the process of its obtaining in two parts: \textbf{i)} $\mathbb{E}\{h_1(d_{ki})\}$ and \textbf{ii)} $ \mathbb{E}\{h_2(d_{ki})\}$. In addition, it is worth noting the fact that $2d_{\lim}\ll\ell$ in practice, meaning that only the first case in \eqref{eq:cdf-distance} is relevant for us. Thus, the probability density function (PDF) of the distance $d_{ki}$ is:
\begin{equation}
    f_{d_{ij}}(d) = \frac{2}{\ell^4}((1+\pi)\ell^2d-4\ell d^2-d^3)\,\,\,\,\text{for } 0\leq d\leq \ell.
\end{equation}
By using the law of the unconsciousness statistician, 
we get: 
\begin{equation}
    \mathbb{E}\{h_1(d_{ki})\}=\int_ {0}^{\ell}{2(d^{\lim})^2\arccos\left({\frac{d}{2d^{\lim}}}\right)}f_{d}(d) \, {\rm d}d,
\end{equation}
\begin{equation}
    \mathbb{E}\{h_2(d_{ki})\}=\int_{0}^{\ell} \frac{d}{2}\sqrt{4(d^{\lim})^2 - d^2}f_{d}(d) \, {\rm d}d.
\end{equation}
Note that we constrain the range of the $\arccos(\cdot)$ function to be one-to-one and of the $\sqrt{\cdot}$ to be real. Although the above integrals have closed-form solutions, we omit them due to lack of space. 
By solving the above integrals, we are able to obtain $\bar{A}^{\text{ovlp}}$ and, consequently, $A_{k}^{\text{dom}}$.
}

{
Let $\rho= \frac{L}{A^{\text{tot}}}= \frac{L}{\ell^2}$ denote the density of APs per unit of area.  
Using the results above, the 
{\emph{probability of a nearby AP being exclusively serving the $k$-th UE through its chosen pilot}} is given by:
\begin{align}
    \Psi_k=&\frac{\text{\# APs exclusively serving UE $k$}}{\text{\# APs nearby UE $k$}}\nonumber\\
    =&\dfrac{\rho A^{\text{dom}}_k}{\rho A_k}=1-F_{d_{ki}}(2d^{\lim})\max\left(\overline{\lvert\mathcal{S}_t\rvert}-1,\,0\right)\dfrac{\bar{A}^{\text{ovlp}}}{A_k}.
    \label{eq:Psi}
\end{align}
With $\rho A^{\text{dom}}_k$ (\emph{average \# exclusive pilot-serving APs}) and the above metric, we are able to quantify whether conditions \textit{\textbf{(a)}} and \textit{\textbf{(b)}} that define the concept of spatial separability hold in practice.
}

\subsection{Baseline CF Protocol}\label{sec:BCF}
{From the concept of spatial separability, we propose the BCF RA protocol, consisting of the two steps: \textit{i}) UEs in $\mathcal{K}$ transmit their randomly chosen pilots; \textit{ii}) UEs are only admitted by the network if they are spatially separable according to Definition \ref{def:3}. This baseline scheme ignores the SUCRe resolution methodology introduced by Steps 2 and 3. The best performance of BCF is obtained when $L^{\max}=L$ APs, since this condition increases the probability of a UE being spatially separable. This means that all $L$ APs operate serving all $\tau_p$ pilots for the BCF scheme, that is, the user-centric perspective is not entirely exploited, while consequently increasing the energy consumption of the system (all APs remain \emph{operative}).
}

\section{Estimating UL Signal Power of Colliding UEs}\label{sec:estimators}
In this section, we introduce three different {estimators for} the total UL signal power $\alpha_t$ of colliding UEs {that can be implemented in a distributed manner.} {Distributed here means that each UE can estimate $\alpha_t$ based solely on its own knowledge.} Recall that the estimate of $\alpha_t$ is used in {the SUCRe decision process in \eqref{eq:decision}.} 

\subsection{Estimator 1}
The first method is based on the simplifying assumption that different signal powers $\alpha_{lt}$'s are almost equal for $l\in\mathcal{P}_t$. Therefore, we can assume that $\alpha_{lt}$ in \eqref{eq:approx} is independent of the AP index $l$, yielding the following estimate:
\begin{equation}
    \hat{\alpha}^{\text{est1}}_{t,k} = N \left(\dfrac{\sum_{l\in\mathcal{P}_t}\sqrt{q_lp_k}\tau_p\beta_{kl}}{\Re(z_k)}\right)^2 - \sigma^2.
    \label{eq:estimate1}
\end{equation}
However, this estimator is unfeasible because the $k$-th UE does not know $\mathcal{P}_t$. Since {the $k$-th} UE only knows the {set of nearby} APs, $\mathcal{C}_k$, and it is reasonable to expect that $\mathcal{P}_t\cap\mathcal{C}_k\neq\emptyset$ (Remark \ref{remark:similarity}), a heuristic way to turn the above estimator into a feasible one {consists in adopting}:
\begin{equation}
    \hat{\alpha}^{\text{est1},\text{approx}}_{t,k} = N \left(\dfrac{\sum_{l'\in\mathcal{C}_k}\sqrt{q_{l'}p_k}\tau_p\beta_{kl'}}{\Re(z_k)}\right)^2 - \sigma^2.
    \label{eq:approx-est1}
\end{equation}
Note that the summation is realized now over $\mathcal{C}_k$. Together with this first approximation, we can use the fact that the $k$-th UE approximately knows part of its own contribution to the estimate $\hat{\alpha}^{\text{est1},\text{approx}}_{t,k}$, which is given by $\gamma_k$ defined in \eqref{eq:gamma_k}. Therefore, the first estimator for $\alpha_t$ in \eqref{eq:alpha_t} is
\begin{equation}
    \underline{\hat{\alpha}}^{\text{est1},\text{approx}}_{t,k} = \max\left( N \left(\dfrac{\sum_{l'\in\mathcal{C}_k}\sqrt{q_{l'}p_k}\tau_p\beta_{kl'}}{\Re(z_k)}\right)^2 - \sigma^2, \gamma_k \right).
    \label{eq:estimator1}
\end{equation} 
Using $\gamma_k$ avoids overly underestimation, since the estimate must be at least in the same order of magnitude of $\gamma_k$.

In addition to $\tilde{z}_k$ in \eqref{eq:approx}, Estimator 1 relies on two other approximations. First, due to the simplifying hypothesis of equal $\alpha_{lt}$'s for $l\in\mathcal{P}_t$, the following condition must be satisfied by the APs in $\mathcal{P}_t$:
\begin{equation}
    \underbrace{\sum_{i\in\mathcal{S}_t}\beta_{i1}}_{\text{AP 1}}=\underbrace{\sum_{i\in\mathcal{S}_t}\beta_{i2}}_{\text{AP 2}}=\dots=\underbrace{\sum_{i\in\mathcal{S}_t}\beta_{i|\mathcal{P}_t|}}_{\text{AP } |\mathcal{P}_t|}.
    \label{eq:estimator1-equality-constraint}
\end{equation}%
Geometrically, this means that, if the colliding UEs transmit with the same UL power $p$, the pilot-serving APs must have the same aggregated effective distances $\sum_{i\in\mathcal{S}_t}d_{il}^{-\zeta}$ to the colliding UEs, as seen in \eqref{eq:geometric-interpretation-pt}. Intuitively, note that this condition is easier to be satisfied when $|\mathcal{P}_t|$ is small, {but unlikely to happen in practical scenarios}. Hence, one can expect that this estimator works better for small collision sizes $|\mathcal{S}_t|$, which requires smaller {number of pilot-serving APs}. Second, due to the approximation of $\mathcal{P}_t$ by $\mathcal{C}_k$ on the UE's side, the following sums need to have the same order of magnitude:
$$
\sum_{l\in\mathcal{P}_t}\sqrt{q_lp_k}\tau_p\beta_{kl}\approx\sum_{l'\in\mathcal{C}_k}\sqrt{q_{l'}p_k}\tau_p\beta_{kl'}\hspace{-2mm} \implies\hspace{-2mm}  \sum_{l\in\mathcal{P}_t}\beta_{kl}\approx\sum_{l'\in\mathcal{C}_k}\beta_{kl'},
$$
{when $p_k$ and $q_l$ are constants.} {Deterioration of estimation performance} occurs when the scaling of the sums are not equal, due to the intrinsic difference between {$\mathcal{P}_t$ and $\mathcal{C}_k$}.
\vspace{-7mm}

\subsection{Estimator 2}
The second method avoids the simplifying assumption made for Estimator 1 of equal $\alpha_{lt}$'s and relies on solving the following optimization problem:
\begin{equation}
    \begin{aligned}
        {\underset{\boldsymbol{\alpha}_t\in\mathbb{R}_{+}^{|\mathcal{P}_t|}}{\mathrm{argmin}}}
        \quad & f({\boldsymbol{\alpha}}_t) = \lVert{\boldsymbol{\alpha}}_t\rVert_1 = \sum_{l\in\mathcal{P}_t}\alpha_{lt},\\
        \textrm{s.t.} \quad & g({\boldsymbol{\alpha}}_t) =  \dfrac{\Re(z_k)}{\sqrt{N}}-\tilde{z}_{k}({\boldsymbol{\alpha}}_t) = 0,
    \end{aligned}
    \label{eq:optz-problem}
\end{equation}
where $\alpha_{lt}$'s for $l\in\mathcal{P}_t$ are organized in vector form as ${\boldsymbol{\alpha}}_t\in\mathbb{R}_{+}^{|\mathcal{P}_t|}=[\alpha_{1t},\alpha_{2t},\dots,\alpha_{|\mathcal{P}_t|t}]^\transp$ and $\tilde{z}_{k}(\boldsymbol{\alpha}_t)$ shows dependence of the approximation with $\boldsymbol{\alpha}_t$. Moreover, $f(\boldsymbol{\alpha}_t):\mathbb{R}_{+}^{|\mathcal{P}_t|}\mapsto\mathbb{R}_{+}$ is a linear objective function and $g(\boldsymbol{\alpha}_t):\mathbb{R}_{+}^{|\mathcal{P}_t|}\mapsto\mathbb{R}_{+}$ is a non-linear equality constraint. The motivation behind formulating the problem above comes from the experimental observation that the approximation $\tilde{z}_k$ often overestimates the true value $\Re(z_k)/\sqrt{N}$. In general, this means that most likely $\tilde{z}_k\geq\Re(z_k)/\sqrt{N}$, implying that $g(\boldsymbol{\alpha}_t)<0$ for finite, small $N$. The problem in \eqref{eq:optz-problem} is thus a way to combat this overestimation of $\tilde{z}_k$ by finding the vector $\boldsymbol{\alpha}_t$ which minimizes the total UL signal power $\alpha_t$ of colliding UEs given that the constraint is supposedly satisfied. The result below gives a closed-form solution for \eqref{eq:optz-problem}.

\begin{theorem}\label{theorem:estimate2}
For $k\in\mathcal{S}_t$, let $\hat{\alpha}^{\text{est2}}_{lt,k}$ denote the estimate of ${\alpha}_{lt}$ made by the $k$-th UE. By using \eqref{eq:approx}, we get that:
\begin{equation}
    \hat{\alpha}^{\text{est2}}_{lt,k}= N \left(\dfrac{\sum_{l'\in\mathcal{P}_t}(\sqrt{q_{l'}p_k}\tau_p\beta_{kl'})^{2/3}}{\Re(z_k)}\right)^2 (\sqrt{q_lp_k}\tau_p\beta_{kl})^{2/3} - \sigma^2,
\end{equation}
for all $l \in \mathcal{P}_t$. Therefore, $\hat{\alpha}^{\text{est2}}_{t,k}=\sum_{l\in\mathcal{P}_t}\hat{\alpha}^{\text{est2}}_{lt,k}$.
\end{theorem}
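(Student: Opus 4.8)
The plan is to recast the constrained program \eqref{eq:optz-problem} as a \emph{convex} problem via a monotone change of variables, solve that problem exactly with a single Lagrange multiplier, and then undo the substitutions. To begin, I would introduce the shorthand $b_l = \sqrt{q_lp_k}\,\tau_p\,\beta_{kl} > 0$ for $l\in\mathcal{P}_t$ and $c = \Re(z_k)/\sqrt{N}$, and substitute $x_l = \alpha_{lt}+\sigma^2$, so that $x_l \ge \sigma^2 > 0$. Using the expression for $\tilde z_k$ in \eqref{eq:approx}, the equality constraint $g(\boldsymbol\alpha_t)=0$ reads $\sum_{l\in\mathcal{P}_t} b_l/\sqrt{x_l} = c$, and since $\|\boldsymbol\alpha_t\|_1 = \sum_{l\in\mathcal{P}_t} x_l - |\mathcal{P}_t|\sigma^2$, minimizing the $\ell_1$-objective is equivalent to minimizing $\sum_{l\in\mathcal{P}_t} x_l$.

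The key step is the further substitution $u_l = x_l^{-1/2} > 0$, which convexifies the program: it becomes
\[
\min_{u_l > 0}\ \sum_{l\in\mathcal{P}_t} u_l^{-2} \qquad\text{subject to}\qquad \sum_{l\in\mathcal{P}_t} b_l\, u_l = c ,
\]
whose objective is a sum of the convex functions $u\mapsto u^{-2}$ on $(0,\infty)$ and whose constraint is now affine — in contrast to the nonconvex equality constraint one faces in the $x_l$ (or original $\alpha_{lt}$) variables. Moreover the feasible set is bounded (each $0 < u_l \le c/b_l$) and $\sum_l u_l^{-2}\to\infty$ as any $u_l\to 0^{+}$, so a minimizer exists and lies in the interior; hence the Lagrange/KKT stationarity conditions, with the positivity constraints inactive, are both necessary and sufficient for the global optimum.

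It then remains to solve those conditions. Forming $\mathcal{L}(\boldsymbol u,\mu) = \sum_l u_l^{-2} + \mu\big(\sum_l b_l u_l - c\big)$ and setting $\partial\mathcal{L}/\partial u_l = -2u_l^{-3} + \mu b_l = 0$ yields $u_l = (2/\mu)^{1/3} b_l^{-1/3}$ with $\mu > 0$, i.e.\ $u_l$ is proportional to $b_l^{-1/3}$. Plugging this into the affine constraint fixes the constant, $(2/\mu)^{1/3} = c\big/\sum_{l'\in\mathcal{P}_t} b_{l'}^{2/3}$, so $u_l = b_l^{-1/3}\, c\big/\sum_{l'} b_{l'}^{2/3}$. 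Undoing the substitutions gives $\alpha_{lt} = x_l - \sigma^2 = u_l^{-2} - \sigma^2 = \left(\sum_{l'\in\mathcal{P}_t} b_{l'}^{2/3}\big/c\right)^{2} b_l^{2/3} - \sigma^2$; substituting back $c = \Re(z_k)/\sqrt{N}$ and $b_l = \sqrt{q_lp_k}\tau_p\beta_{kl}$ reproduces the claimed $\hat\alpha^{\text{est2}}_{lt,k}$, and summing over $l\in\mathcal{P}_t$ gives $\hat\alpha^{\text{est2}}_{t,k}$.

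The main obstacle is not the calculus but handling the constraints as stated in \eqref{eq:optz-problem}: the program is feasible only when $\Re(z_k) > 0$, and the interior stationary point above actually solves the version of the problem in which the nonnegativity $\boldsymbol\alpha_t\in\mathbb{R}_+^{|\mathcal{P}_t|}$ is relaxed to $\boldsymbol\alpha_t\in\mathbb{R}^{|\mathcal{P}_t|}$. One should therefore check that the resulting $\alpha_{lt} = u_l^{-2}-\sigma^2$ are nonnegative (equivalently $u_l\le\sigma^{-1}$, which holds in the operating regime where $\Re(z_k)$ is small, consistent with $\tilde z_k$ overestimating $\Re(z_k)/\sqrt{N}$ as noted before \eqref{eq:optz-problem}); otherwise the closed form should be read as the sign-relaxed solution, which is all that is needed since $\hat\alpha^{\text{est2}}_{t,k}$ enters only through the threshold test \eqref{eq:decision}.
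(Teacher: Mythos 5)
Your proposal is correct and reproduces the paper's closed form exactly; at its core it is the same Lagrange-multiplier computation the paper uses (stationarity gives $\alpha_{lt}\propto(\sqrt{q_lp_k}\tau_p\beta_{kl})^{2/3}-\sigma^2$, and the constraint fixes the multiplier), but your route adds something the paper's appendix does not. The paper works directly in the $\alpha_{lt}$ variables, writes down $\nabla f=\mu\nabla g$, solves, and stops — it never argues that the stationary point is in fact the minimizer of the nonconvex program \eqref{eq:optz-problem}. Your substitution $u_l=(\alpha_{lt}+\sigma^2)^{-1/2}$ turns the problem into minimizing the convex function $\sum_l u_l^{-2}$ over an affine constraint, with coercivity at the boundary guaranteeing an interior minimizer, so the first-order conditions become necessary \emph{and} sufficient and the closed form is certified as the global optimum rather than merely a critical point. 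You also flag the two hypotheses the paper leaves implicit: feasibility requires $\Re(z_k)>0$, and the nonnegativity $\boldsymbol{\alpha}_t\in\mathbb{R}_{+}^{|\mathcal{P}_t|}$ is only respected when the recovered $\alpha_{lt}=u_l^{-2}-\sigma^2$ are nonnegative (otherwise the formula is the sign-relaxed solution). Both caveats are genuine and handled sensibly; since the estimate is only used through the threshold test \eqref{eq:decision} and is subsequently floored at $\gamma_k$ in \eqref{eq::estimator2}, reading the closed form as the relaxed solution is harmless. In short: same destination and same multiplier algebra, but your convexification supplies the optimality guarantee the paper's proof omits.
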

\begin{proof}
    The proof can be seen in the Appendix.
\end{proof}

However, as in the case of Estimator 1, the estimate obtained in Theorem \ref{theorem:estimate2} is unfeasible due to the fact that $\mathcal{P}_t$ is unknown to the $k$-th UE. As before{, we then have:}
\begin{align}
     \hat{\alpha}^{\text{est2},\text{approx}}_{lt,k}&= N \left(\dfrac{\sum_{l'\in\mathcal{C}_k}\mathrm{cte}_{kl'}}{\Re(z_k)}\right)^2\nonumber \mathrm{cte}_{kl} - \sigma^2,
    \label{eq:approx-est2}
\end{align}
for all $l \in \mathcal{C}_k$ and where $\mathrm{cte}_{kl}=(\sqrt{q_{l}p_k}\tau_p\beta_{kl})^{2/3}$ . Therefore, 
\begin{equation}
    \underline{\hat{\alpha}}^{\text{est2},\text{approx}}_{t,k} = \max\left( \hat{\alpha}^{\text{est2},\text{approx}}_{t,k}, \gamma_k \right),
    \label{eq::estimator2}
\end{equation}
where ${\hat{\alpha}}^{\text{est2},\text{approx}}_{t,k}=\sum_{l'\in\mathcal{C}_k}{\hat{\alpha}}^{\text{est2},\text{approx}}_{l't,k}$. Different from Estimator 1 that relies on two approximations other than $\tilde{z}_k$, Estimator 2 is based only in approximating the sum of the average channel gains over $\mathcal{P}_t$ by $\mathcal{C}_k$. 
\vspace{-4mm}

\subsection{Estimator 3}
The third estimator relies on taking more advantage of the fact that the CPU knows the pilot activity matrix $\tilde{\mathbf{A}}$. From this information and the expression in \eqref{eq:check-pilot-t}, the CPU can obtain an estimate of the total UL signal power of colliding UEs:
\begin{equation}
    \widehat{\alpha_{t}}=\sum_{l\in\mathcal{L}} \max\left(\tilde{a}_{tl} - \sigma^2, 0 \right)=\sum_{l\in\mathcal{L}} \max\left(\dfrac{1}{N}\lVert\mathbf{y}_{lt}\rVert^2_2 - \sigma^2, 0 \right),
    \label{eq:widehat-alpha}
\end{equation}
where $\tilde{a}_{tl}$ is entry $(t,l)$ of $\tilde{\mathbf{A}}$. Then, this estimate can be sent back to {the pilot-serving} APs in $\mathcal{P}_t$. The DL precoded signal in \eqref{eq:DL-precoding} can be re-designed as:
\begin{equation}
   \mathbf{V}_l = \sqrt{q_l}\sum_{t\in\mathcal{T}_l} \dfrac{\mathbf{y}_{lt}}{{\sqrt{N \cdot \widehat{\alpha_{t}}}}} \boldsymbol{\phi}^{\transp}_t. 
   \label{eq:new-DL-precoding}
\end{equation}
Now, the effective DL transmit power $\tilde{q}_{lt}$ per pilot is
\begin{equation}
    \tilde{q}_{lt}=\left(\dfrac{q_l}{N\cdot\widehat{\alpha_{t}}}\right)\lVert\mathbf{y}_{lt}\rVert^{2}_{2}.
    \label{eq:est3-eff-tx-power}
\end{equation}
{Because of the denominator $N\cdot\widehat{\alpha_{t}}$,} it is natural to expect that APs will transmit with less power when adopting the precoding in \eqref{eq:new-DL-precoding} than that in \eqref{eq:DL-precoding}. With the DL precoded signal in \eqref{eq:new-DL-precoding}, the approximation $\tilde{z}_k$ becomes: 
\begin{equation}
    \dfrac{\Re(z_k)}{\sqrt{N}}\approx\tilde{z}_k = \sum_{l\in\mathcal{P}_t}\left(\dfrac{\sqrt{q_lp_k}\tau_p\beta_{kl}}{\sqrt{\widehat{\alpha_{t}}}}\right)= \dfrac{1}{\sqrt{\widehat{\alpha_{t}}}}\sum_{l\in\mathcal{P}_t}\sqrt{q_l p_k}\tau_p\beta_{kl}.
    \label{eq:new-approx}
\end{equation}
{Since the power received from the precoded RA response can be very low due to \eqref{eq:est3-eff-tx-power}, which would make it difficult for the UEs to process the signal, we consider that} the $k$-th UE can combat this low received power by multiplying the received signal $z_k$ by a \emph{compensation factor} $\delta$. This pre-processing operation can be described as:
\begin{equation}
    \underline{\Re(z_k)}=\delta\left(\dfrac{(\Re(z_k)-\sigma)}{\sqrt{N}}\right),
\end{equation}
whose subtraction by $\sigma$ helps in not increasing the magnitude of the noise by a factor of $\delta$.\footnote{Although the compensation factor coherently multiplies a part of the interference and noise components, this method provides gain in terms of energy efficiency as the pilot-serving APs effectively transmit with less power.} {In principle, the compensation factor is re-scaling the received signal appropriately so that the UE can take advantage of the information embedded in $\widehat{\alpha_{t}}$.} One way to obtain such $\delta$ is to characterize the ratio:
\begin{equation}
    \delta=\sqrt{\dfrac{q_l}{{\tilde{q}^{\text{avg}}_{l}}}},
    \label{eq:est3-delta}
\end{equation}
where ${{\tilde{q}^{\text{avg}}_{l}}}$ is the average of $\tilde{q}_{lt}$ defined in \eqref{eq:est3-eff-tx-power} with respect to the number of active pilots, number of operative APs, and channel realizations given a number of inactive users, $|\mathcal{U}|$, and probability of activation, $P_a$. In the remainder, we will characterize how and when the UEs can obtain such information and a suitable value of $\delta$ for a scenario of interest. A trivial estimator of $\alpha_t$ is then:
\begin{equation}
    \underline{\hat{\alpha}}^{\text{est3},\text{approx}}_{t,k} = \max\left(\left(\dfrac{\sum_{l'\in\mathcal{C}_k}\sqrt{q_{l'}p_k}\tau_p\beta_{kl'}}{\underline{\Re(z_k)}}\right)^2, \gamma_k \right),
    \label{eq:estimator3}
\end{equation} 
also adopting the approximation of substituting $\mathcal{P}_t$ by $\mathcal{C}_k$.

The main difference between Estimators 1 and 3 is that now the $\alpha_{lt}$'s are in fact \emph{equal} by construction of the DL precoded signal. This is because the latter uses the common factor $(N\cdot\widehat{\alpha_{t}})^{-1/2}$ in \eqref{eq:new-DL-precoding} to normalize the precoded DL signal of {the pilot-serving APs, $\mathcal{P}_t$. However, this normalization reduces the effective DL transmit power $\tilde{q}_{lt}$, making it necessary for more APs to be designated to serve the $t$-th pilot. For this reason, we introduced} the compensation factor $\delta$ to decrease the number of {pilot-serving} APs, $|\mathcal{P}_t|$.

\begin{remark}\label{remark:estimator-cellular}
    {(Comparison with Ce-SUCRe)} Interestingly, all {three} estimators {above} have a similar expression when considering the cellular case where $L=1$ AP and $N=M$ antennas. In this particular case, the estimators become the same as the first estimation method proposed in \cite{Bjornson2017}: 
    \begin{equation}
        \underline{\hat{\alpha}}^{\text{est},\text{approx}}_{t,k} = \max\left(\dfrac{M{qp_k}\tau^2_p\beta^2_{k}}{(\Re(z_k))^2} - \sigma^2, p_k\tau_p\beta_k \right).
        \label{eq:mmtc:cellular-estimator}
    \end{equation}
    Hence, our CF-SUCRe protocol can be seen as a generalization of the Ce-SUCRe by \cite{Bjornson2017}.
\end{remark}

\begin{remark}\label{remark:dependency-iota-Lmax}
    {(Estimators' Dependency with $\iota$ and $L^{\max}$)} All estimators rely on the approximation of $\mathcal{P}_t$ by $\mathcal{C}_k$. This therefore means that the performance of the estimators also depends on the choice of the parameters $\iota$ and $L^{\max}$, since they control the sizes $|\mathcal{C}_k|$ and $|\mathcal{P}_t|$, respectively.
\end{remark}

\subsection{{Evaluating the Estimators Numerically}}
We are interested in better understanding how the three estimators proposed above operate. We consider a CF-mMIMO network comprised of $L=64$ APs disposed in an 8 $\times$ 8 square grid layout, Fig. \ref{fig:illustration-ccal-pcal}. We fix the number of antennas $N=8$ per AP {because it gives} a reasonable approximation $\tilde{z}_k$ in \eqref{eq:approx}. Table \ref{tab:simulation-parameters} summarizes the simulation parameters. Note that the {APs and} UEs transmit with equal power {$q_l$ and} $p${, respectively}. 

\begin{table}[htp]
    \centering
    \caption{Simulation parameters}
    \label{tab:simulation-parameters}
    \small
    \begin{tabular}{rl}
        \hline
        \bf Parameter & \bf Value \\ \hline
        square length $\ell$ & 400 m\\
        multiplicative power constant $\Omega$ & -30.5 dB\\
        pathloss exponent $\zeta$ & 3.67\\
        noise power $\sigma^2$ & -94 dBm \\
        \# pilots $\tau_p$ & 5 \\
        \# APs $L$ & 64 APs \\
        \# antennas per AP $N$ & 8 \\
        DL transmit power per AP $q_l$ & {$\frac{200}{L}$} mW \\
        UL transmit power $p$ & 100 mW \\
        compensation factor $\delta$ & 8\\ 
        \# BS antennas $M$ & 64 \\
        DL transmit power of BS $q$ & 200 mW \\
        \# setups & 100 \\
        \# channel realizations & 100\\
        \hline
    \end{tabular}
\end{table}

\subsubsection{Evaluating Approximation of $\mathcal{P}_t$ by $\mathcal{C}_k$} 
We are particularly interested in evaluating the approximation $\sum_{l\in\mathcal{P}_t}\beta_{kl}\approx\sum_{l'\in\mathcal{C}_k}\beta_{kl'}$ that makes the proposed estimators viable when considering $\mathcal{C}_k=\check{\mathcal{C}}_k$ (natural-nearby APs). {Hence, we define} the \emph{normalized magnitude difference} (NMD) as:
\begin{equation}
{\mathrm{NMD}_k=\left(\sum_{l\in\mathcal{P}_t}\beta_{kl}-\sum_{l'\in\check{\mathcal{C}}_k}\beta_{kl'}\right)/\sum_{l\in\mathcal{P}_t}\beta_{kl}, \quad \text{for UE } k\in\mathcal{S}_t}.
\end{equation}
This metric quantifies how the approximated value $\sum_{l'\in\check{\mathcal{C}}_k}\beta_{kl'}$ is different from the true value $\sum_{l\in\mathcal{P}_t}\beta_{kl}$ in a relative way. 
It measures the similarity of the sets $\check{\mathcal{C}}_k$ and $\mathcal{P}_t$ (Remark \ref{remark:similarity}) based on the values of their elements; when, $\check{\mathcal{C}}_k=\mathcal{P}_t$, $\mathrm{NMD}_k$ evaluates to zero.

\begin{table*}
    \centering
    \caption{Best parameter pair $(|\mathcal{C}_k|,L^{\max})$ at median for $L=64$ APs and $N=8$ antennas per AP.}
 \label{tab:best-parameters-collision}
 \small
 \resizebox{\textwidth}{!}{%
    \begin{tabular}{c|cccccccccc}
    \begin{tabular}[c]{@{}c@{}}Collision\\ size $|\mathcal{S}_t|$\end{tabular} & 1 & 2 & 3 & 4 & 5 & 6 & 7 & 8 & 9 & 10 \\ \hline
    Est. 1 & (6,6) & (3,3) & (7,4) & (7,6) & (7,7) & (7,8) & (7,8) & (7,10) & (7,10) & (7,12) \\
    Est. 2 & (7,10) & (7,7) & (6,6) & (7,8) & (7,7) & (7,8) & (7,9) & (7,10) & (7,11) & (7,14) \\
    Est. 3$^{\dagger}$ & {(5,5,2.29)} & {(7,4,2.13)} & {(7,6,2.56)} & {(7,6,2.59)} & {(7,7,2.8)} & {(7,7,2.82)} & {(7,9,3.18)} & {(7,9,3.19)} & {(7,10,3.36)} & {(7,12,3.66)}\\
    \hline
    \end{tabular}%
    }
    {\raggedright{$^{\dagger}$Estimator 3 is comprised by a triplet where the last element indicates the approximated value of $\delta$ for the chosen parameters.}\par}
    \vspace{-5mm}
\end{table*}

{Fig.} \ref{fig:sum-over-Pcal-Ccal} shows the average NMD, denoted as $\overline{\mathrm{NMD}}$, which is obtained by averaging out the $\mathrm{NMD}_k$ from colliding UEs in $\mathcal{S}_t$ and several realizations of $\mathcal{S}_t$. From Fig. \ref{fig:avg-mag}, we can see that the difference in the scale of the sums is more substantial for a small number of pilot-serving APs $\lvert\mathcal{P}_t\rvert\leq L^{\max}$. The intuition behind this result is that some of the nearby APs contained in $\check{\mathcal{C}}_k$ will not be selected in the construction of $\mathcal{P}_t$ for small $L^{\max}$. One way to reduce this effect is to vary the size of $\mathcal{C}_k$ through $\iota$ instead of using $\check{\mathcal{C}}_k$. Fig. \ref{fig:avg-prob} reveals three other important insights: {
\textbf{a)} the larger the collision size $|\mathcal{S}_t|$ and the smaller the $L^{\max}$, the more likely nearby APs of colliding UEs will be not selected as pilot-serving APs; \textbf{b)} for a value of $L^{\max}$ large enough, the opposite is true: It becomes more likely that pilot-serving APs will contain non-nearby APs, irrelevant to the UEs. The good thing is that these non-nearby APs have average channel gains with decreasing magnitude with distance regarding a UE, making the approximation still good between the sums;  \textbf{c)} the parameter $L^{\max}$ must be selected carefully such that $\mathbb{P}\{\sum_{l \in {\mathcal{P}}_t} \beta_{kl} > \sum_{l' \in \check{\mathcal{C}}_k} \beta_{kl'}\}$ is close to 50\%, meaning that $\sum_{l\in\mathcal{P}_t}\beta_{kl}\approx\sum_{l'\in\check{\mathcal{C}}_k}\beta_{kl'}$. This ensures a good approximation when replacing $\mathcal{P}_t$ by $\check{\mathcal{C}}_k$ in the estimators. In summary, we must unquestionably expect some bias from the approximation of substituting $\mathcal{P}_t$ and $\mathcal{C}_k$, because it is difficult to match these sets when the collision sizes increases ($\uparrow \mathcal{S}_t$ means $\uparrow L^{\max}$). Hence, this bias increases with the collision size, $|\mathcal{S}_t|$, and can be countered in some way with a right selection of $L^{\max}$. However, the bias is also bounded, since the non-nearby APs contained in $\{\mathcal{P}_t\setminus\mathcal{C}_k\}$ becomes irrelevant to the UE due to increasing distance.
}

\begin{figure}[htp]
    \centering
    \subfloat[\small Avg. NMD using $\mathrm{sign}(\overline{\mathrm{NMD}})\log_{10}(1+\overline{\mathrm{NMD}})$ scale.
    ]{\includegraphics[trim={0mm 5mm 1mm 15mm}, clip, width= .45\linewidth]{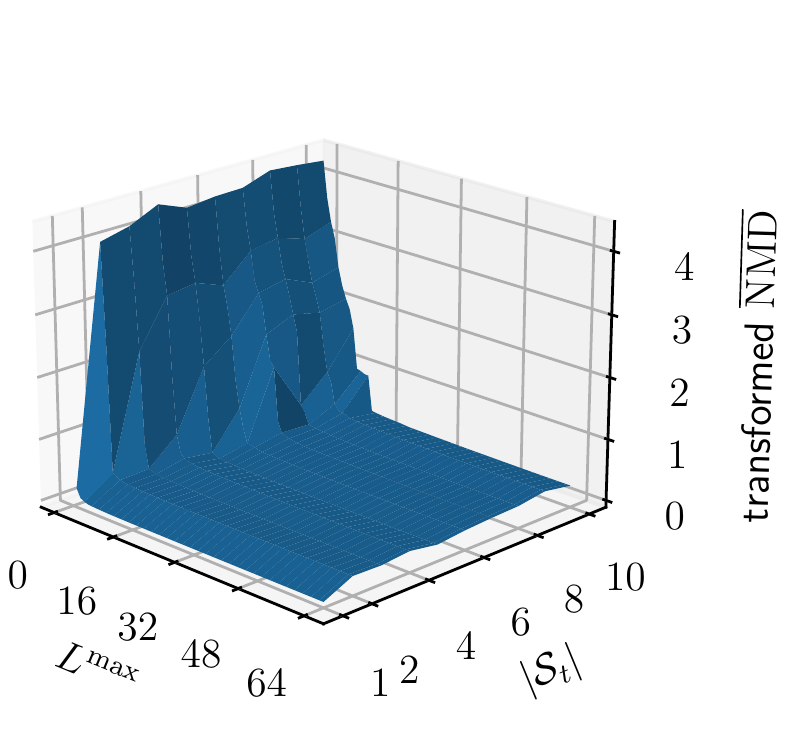}\label{fig:avg-mag}}%
    \qquad
    \subfloat[\small \label{fig:avg-prob} Average evaluation of $\mathbb{P}\{\sum_{l \in {\mathcal{P}}_t} \beta_{kl} > \sum_{l' \in \check{\mathcal{C}}_k} \beta_{kl'}\}$.]{\includegraphics[trim={0mm 2mm 0mm 10mm}, clip, width= .45\linewidth]{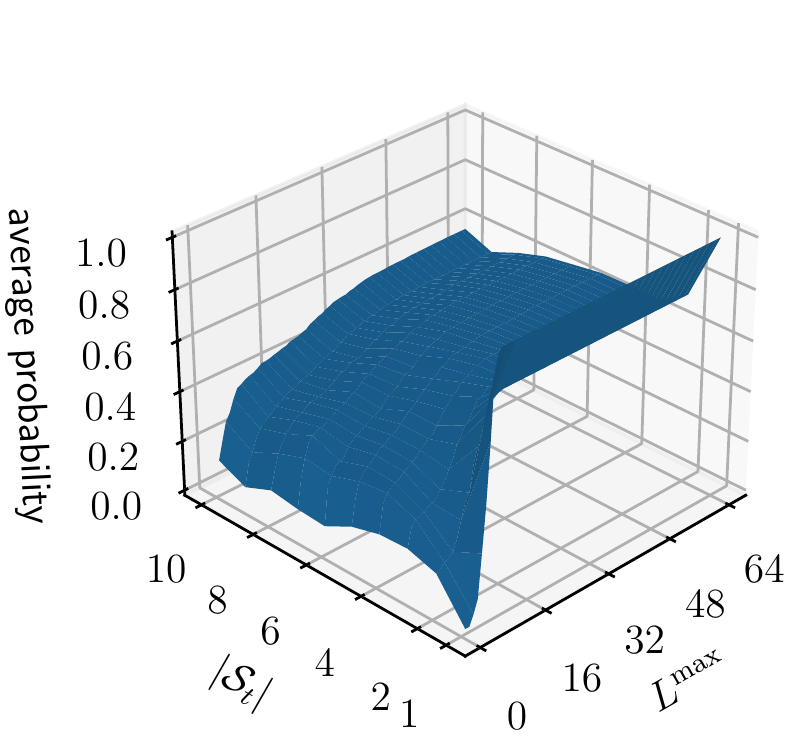}}%
    \caption{\small Evaluating the approximation $\sum_{l \in {\mathcal{P}}_t} \beta_{kl} \approx \sum_{l' \in \check{\mathcal{C}}_k} \beta_{kl'}$ for different collision sizes $|\mathcal{S}_t|$ and choices of $L^{\max}$.}
    \label{fig:sum-over-Pcal-Ccal}
\end{figure}

\subsubsection{Estimator 3 differences}
We now want to understand how the effective DL transmit power $\tilde{q}_{lt}$ in \eqref{eq:est3-eff-tx-power} of the framework used to obtain Estimator 3 changes compared to the more traditional method of transmission defined in \eqref{eq:DL-precoding}. By varying the collision sizes $|\mathcal{S}_t|$ from 1 to 10, UEs' positions, and channel realizations, we found that $\tilde{q}_{lt}$ has an average value of approximately ${{\tilde{q}^{\text{avg}}_{l}}}=0.0489$ mW for $L^{\max}=64$ APs. When compared to the value of $q_l=200/64=3.125$ mW, it is possible to certify that the power of the precoding used in \eqref{eq:new-DL-precoding} is extremely reduced. The compensation factor is approximately $\delta\approx8$ {when $L=64$ APs, $N=8$ antennas per AP, and $L^{\max}=L$. Hence, the value of $\delta$ changes according to the parameter $L^{\max}$.}

\subsubsection{General performance comparison}\label{subsubsec:general-perf-estimators}
Our goal here is to evaluate the estimators under \textbf{i)} different collision sizes $|\mathcal{S}_t|$ and \textbf{ii)} different choices of the parameter pair $(\iota,L^{\max})$ (Remark \ref{remark:dependency-iota-Lmax}). We use the following metrics regarding the $k$-th UE for $k\in\mathcal{S}_t$: \textbf{i)} \emph{normalized estimation bias} (NEB) $b_{t,k}=(\mathbb{E}\{\hat{\alpha}_{t,k}\}-{\alpha}_{t})/{\alpha}_{t}$; and \textbf{ii)} \emph{normalized mean squared error} (NMSE) $\mathrm{NMSE}_{t,k}={\mathbb{E}\{|\hat{\alpha}_{t,k}-\alpha_t|^2\}}/\alpha_t^2$. The expectations are taken with respect to channel realizations. Then, the following simulation routine is adopted: \textbf{1) Setup:} a setup is generated by fixing the number of colliding UEs $|\mathcal{S}_t|$ and by dropping these $|\mathcal{S}_t|$ UEs over the coverage area at random; \textbf{2) Channel Realizations:} several channel realizations are generated for the created setup in 1). For each estimator, we obtain the $\mathrm{NMSE}_{t,k}$ of every $k$-th UE in $\mathcal{S}_t$ and store them; \textbf{3) Statistics:} after the realization of several setups, we evaluate the median value of the stored $\mathrm{NMSE}_{t,k}$'s together with their interquartile range (IQR)\footnote{{The reason why we use median and IQR as metrics is to show that estimators' statistics are not symmetric in general, as they depend on parameters like: $\beta_{kl}$, $|\mathcal{S}_t|$, $L^{\max}$, and $\iota$.}}, which evaluates the variation of the values given the lower and upper quartiles. The above procedure is repeated for all combinations of $(\iota,L^{\max})$. For simplicity, instead of varying $\iota$, we actually parameterize its variation by directly adjusting the number of nearby APs, $|\mathcal{C}_k|$. Therefore, for the considered scenario, suitable ranges are $|\mathcal{C}_k|\in\{1,2,\dots,7\}$ and $L^{\max}\in\mathcal{L}$ (see Fig. \ref{fig:illustration-ccal-pcal}).

First, {by performing exhaustive search,} we seek for the best pair of parameters $(|\mathcal{C}_k|,L^{\max})$ in the sense of obtaining the smallest median ${\mathrm{NMSE}}$ for each one of the evaluated collision sizes $|\mathcal{S}_t|$ and proposed estimators. We report the best parameter pairs $(|\mathcal{C}_k|,L^{\max})$ in Table \ref{tab:best-parameters-collision}. {For Estimator 3, we also reported the best $\delta$ as the last element in a triple.} As expected, the larger the collision size $|\mathcal{S}_t|$, the greater must be the number of pilot-serving APs $L^{\max}$. In addition, the optimal $|\mathcal{C}_k|$ does not vary so much, since $L^{\max}$ was chosen accordingly, making the adjustment of $|\mathcal{C}_k|$ irrelevant in some sense. {However, in practice, it is hard to jointly optimize $|\mathcal{C}_k|$ and $L^{\max}$, since the first is made on the UE side and the second on the CPU side; hence, the choice of $|\mathcal{C}_k|$ is paramount.}

Fig. \ref{fig:collision-size-estimators} shows the best median ${\mathrm{NMSE}}$ obtained for each one of the estimators when adopting the pairs of parameters shown in Table \ref{tab:best-parameters-collision}. As a baseline, we consider a Ce-mMIMO network where a cell-centered BS equipped with $M=64$ antennas serves the colliding UEs using the estimator stated in Remark \ref{remark:estimator-cellular}. Furthermore, in an attempt to be fair in the comparison, the DL transmit power of the BS is $q=L\cdot q_l$, in such a way as to have the same amount of power in the entire coverage area $\sum_{l=1}^{L}q_l=q=200$ mW for both networks. The other parameters of the Ce-mMIMO case follows the ones stated in Table \ref{tab:simulation-parameters}. The results presented in the figure reveal, at median, that among the three CF estimators: \textbf{a)} Estimator 3 is the best in the case of no collision $|\mathcal{S}_t|=1$, \textbf{b)} Estimator 2 works better than the other two on almost all collision sizes. Another interesting fact to note is that CF estimators perform better than the cellular case on almost all evaluated scenarios. The intuition behind this result is the geographic arrangement of APs, which can eventually help to mitigate interference arising from collisions since UEs tend to be far away and, consequently, their nearby APs are distinct. Finally, differences between IQRs are related to the scaling and distribution of the average channel gains $\beta_{kl}$. For the Ce-mMIMO network, as the number of collisions increases, the variability of the estimate is severely affected; since the $\beta_{k}$'s are quite different among edge-UEs (weaker) and center-UEs (stronger), the latter tend to have better estimates than the former. On the other hand, in the CF-mMIMO case, the $\beta_{kl}$ values vary little, as only the $\beta_{kl}$'s in relation to nearby APs have relevance to the estimate and, hence, we have a better uniformity of the estimates. This result reveals one of the main benefits of exploiting user-centric CF networks to solve the RA problem. {However, note that Estimator 3 has the most significant variability among the CF estimators. The explanation is that it becomes challenging to adjust the received signal power through the compensation factor $\delta$ in \eqref{eq:est3-delta} when the collision size increases.}

\begin{figure}[!htbp]
    \vspace{-1mm}
    \centering
    \includegraphics{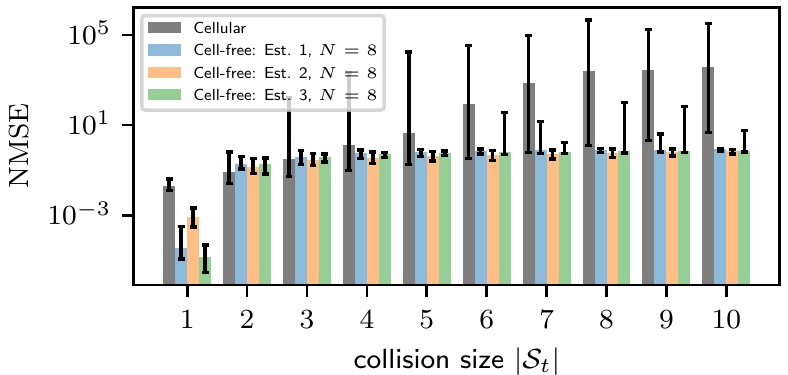}
    \vspace{-1mm}
 \caption{\small Comparison of the performance of the estimators for different collision sizes, $|\mathcal{S}_t|$, in terms of the NMSE metric. The pair of parameters $(|\mathcal{C}_k|,L^{\max})$ is selected according to Table \ref{tab:best-parameters-collision}. The colored bars denote the median values, while the error bars show the IQRs.}
\label{fig:collision-size-estimators}
\end{figure}

In Fig. \ref{fig:antennas-per-ap-estimators}, we evaluate how the performance of the estimators is dependent on the number of antennas $N$ per AP for $|\mathcal{S}_t|=2$ colliding UEs. {For each value of $N$, the best pair of parameters $(|\mathcal{C}_k|,L^{\max})$, including $\delta$ for Estimator 3, was defined by performing an exhaustive search following the same procedure done to generate Table \ref{tab:best-parameters-collision}.} The parameter $N$ is crucial to achieve the channel hardening and favorable propagation effects, which are related to the approximations of: \textbf{i)} $ \frac{1}{N}\lVert\mathbf{y}_{lt}\rVert^2_2$ in \eqref{eq:check-pilot-t} and \eqref{eq:widehat-alpha}; \textbf{ii)} $\tilde{z}_k$ in \eqref{eq:approx} and \eqref{eq:new-approx}. There are three main observations we can learn from the figure: {\textbf{1)} the total number of antennas $L\cdot N$ on the CF-mMIMO network needs to be at least $4\times$ or higher than the number of antennas $M$ at a BS of an "equivalent" Ce-mMIMO network (result in accordance with \cite{Bjornson2017}); \textbf{2)} CF estimators tend to underestimate the true $\alpha_t$ in \eqref{eq:alpha_t} and their estimates are not asymptotically unbiased; a consequence of approximating $\mathcal{P}_t$ by $\mathcal{C}_k$; in general $\sum_{l\in\mathcal{P}_t}\beta_{kl}>\sum_{l'\in\check{\mathcal{C}}_k}\beta_{kl'}$ leading to a negative bias (related to the discussion of Fig. \ref{fig:sum-over-Pcal-Ccal}), and \textbf{3)} CF estimators aim to achieve an NMSE performance floor after a certain increase of $N$; again, this is related to the approximation of $\mathcal{P}_t$ by $\mathcal{C}_k$. Based on these, we introduce the following remark.}

\begin{remark}\label{remark:underestimation}
    {(Underestimation: A consequence of Remark \ref{remark:similarity})} Due to the fact that the UEs tend to underestimate the total UL signal power of colliding UEs $\alpha_t$ in \eqref{eq:alpha_t} due to the approximation of substituting $\mathcal{P}_t$ by $\mathcal{C}_k$, more UEs are expected to declare themselves winners of the contention in \eqref{eq:decision} of Step 3. Fortunately, the concept of spatial separability (Definition \ref{def:3}) can aid the CF network to support the access of more UEs.
\end{remark}

{From Fig. \ref{fig:antennas-per-ap-estimators}, one can observe again that Estimator 2 is the best in terms of overall performance, as expected, based on the fact that it is founded on fewer assumptions than the others.}

\begin{figure}[!htbp]
    \vspace{-1mm}
    \centering
    \subfloat[\label{fig:bias-antennas-per-ap} NEB.]{\includegraphics{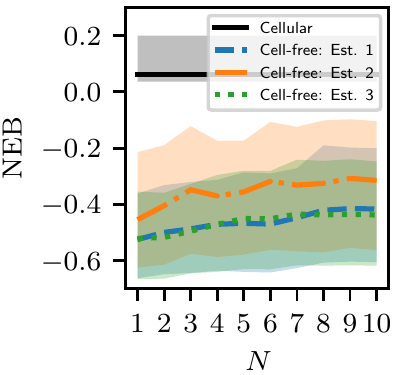}}%
    \quad
    \subfloat[\label{fig:nmse-antennas-per-ap} NMSE.]{\includegraphics{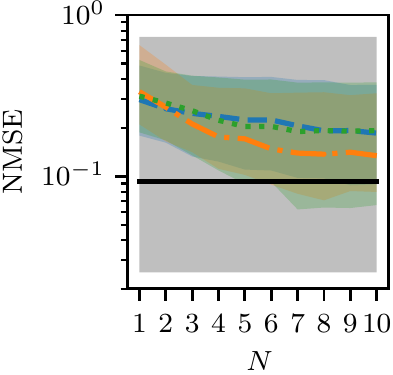}}%
    \vspace{-1mm}
    \caption{\small Evaluation of the performance of the estimators when varying the number of antennas per AP, $N$, for a fixed collision size of $|\mathcal{S}_t|=2$ UEs. The pair of parameters $(|\mathcal{C}_k|,L^{\max})$ is selected according to exhaustive search, together with $\delta$ for Estimator 3. The lines stand for the median values, while the colored regions indicate the IQRs.}
    \label{fig:antennas-per-ap-estimators}
    \vspace{-2mm}
\end{figure}

\subsection{Selecting Parameters in Practice}\label{subsec:selecting-pars}
From the evaluation of the estimators, it was possible to see the importance of selecting the pair of parameters $(|\mathcal{C}_k|,L^{\max})$ for their performance. In this part, we show simple ways to select these parameters in practice.

\subsubsection{Selecting $\iota$ or $|\mathcal{C}_k|$}
For $k\in\mathcal{K}$, the selection of $\iota$ or $|\mathcal{C}_k|$ occurs on the UE's side and it affects the quality of the approximation of substituting $\mathcal{P}_t$ by $\mathcal{C}_k$ made to turn the three estimators feasible in practice. We propose two methods to implement such selection: \textbf{a) Fixed:} {the $k$-th UE} uses the natural set $\check{\mathcal{C}}_k$ to obtain the estimates; \textbf{b) Greedy Flexible:} {the $k$-th UE} obtains $\check{\mathcal{C}}_k$ in Step 0. Then, it evaluates its estimate of $\hat{\alpha}_{t,k}$ by sweeping over all possible sizes of $\mathcal{C}_k$ in a descending order, where $1\leq|\mathcal{C}_k|\leq|\check{\mathcal{C}}_k|$. At each size reduction of $\mathcal{C}_k$, the smallest average channel gain $\beta_{kl'}$ for $l'\in\mathcal{C}_k$ is removed. After sweeping, the $k$-th UE will have a set of estimates $\{\hat{\alpha}^{(|\check{\mathcal{C}}_k|)}_{t,k},\dots,\hat{\alpha}^{(1)}_{t,k}\}$. The $k$-th UE then evaluates the decision in \eqref{eq:decision} for each of the obtained estimates. If one of the decisions indicates that the UE must retransmit, the $k$-th UE decides by $R_k$; otherwise, it chooses $I_k$. The idea is to \emph{greedily} increase the quantity of re-transmissions in order to try to resolve them using the spatial separability principle. For both methods, note that $\sum_{l'\in\mathcal{C}_k}\beta_{kl'}$'s can be computed only once and stored at the $k$-th UE, which is valid in a scenario without much movement. 

\subsubsection{Selecting $L^{\max}$}
The choice of $L^{\max}$ occurs at the CPU and defines the number of pilot-serving APs, $|\mathcal{P}_t|$. For the selection of $L^{\max}$, we propose a \emph{training phase} that assumes that the network has a reliable estimate of the number of inactive UEs, $|\mathcal{U}|$, and the average collision size, $\overline{\rvert\mathcal{S}_t\lvert}$, given a probability of activation, $P_a$. Algorithm \ref{algo:Lmax-selection} describes the procedure to obtain a suitable value for $L^{\max}$, which is independent of the estimator choice. In parallel, the training phase can also be used to obtain the compensation factor $\delta$ by computing the average of {${\tilde{q}_{lt}}$ in \eqref{eq:est3-eff-tx-power}} and broadcasting this value to the UEs. The training phase needs only to be repeated when, for example, the RA performance drops below a selected threshold. Algorithm \ref{algo:Lmax-selection} assumes the use of $\check{\mathcal{C}}_k$.

\begin{algorithm}[htp]
    \centering
    \caption{
    Training phase for selection of $L^{\max}$}
    \label{algo:Lmax-selection}
    \small
    \begin{algorithmic}[1]
        \State \textbf{Input:} Set of inactive UEs: $\mathcal{U}$;\, Probability of activation: $P_a$;\, \# APs: $L$;\, \# antennas per AP: $N$;\, \# RA pilots: $\tau_p$;\, \# random transmission rounds: $R$;\, \# transmission repetitions: $E$ [symbols].
        \State \textbf{Output:} \# pilot-serving APs $L^{\max}$
        \State \textbf{Procedure:}
        \State $T=R\cdot{E}$ \Comment{{\small calculate training duration in symbols}}
        \For{$r\gets{1}$\text{ \textbf{to}} $R$}
            \State generate $\mathcal{K}\subset\mathcal{U}$ given $P_a$
            \State UE $k\in\mathcal{K}$ selects a pilot $t\in\mathcal{T}$ at random
            \For{$e\gets{1}$\text{ \textbf{to}} $E$}
                \State $|\mathcal{K}|$ UEs transmit their pilots as in eq. \eqref{eq:pilot-tx}
                \State CPU calculates and store $\tilde{\mathbf{A}}\in\mathbb{R}_{+}^{\tau_p\times{L}}$ as per \eqref{eq:matrix-A-tilde}
            \EndFor
            \State average out realizations of $\tilde{\mathbf{A}}$ w.r.t. $E$, yielding $\underline{\tilde{\mathbf{A}}}$
            \State $\underline{\tilde{\mathbf{b}}}_t\in\mathbb{R}_{+}^{L}=[\underline{\tilde{\mathbf{A}}}]_{t,:}$ \Comment{{\small $t$-th row of $\underline{\tilde{\mathbf{A}}}$}}
            \For{$t\gets{1}$\text{ \textbf{to} }$\tau_p$} 
                \If{{$t$ is active (was used by a UE in $\mathcal{K}$)}}
                    \State $\epsilon\gets\frac{1}{L}\sum_{l\in\mathcal{L}}\underline{\tilde{{b}}}_{tl}$ \Comment{{\small average threshold}}
                    \State ${L}_t\gets\mathrm{sum}(\underline{\tilde{\mathbf{b}}}_t\geq\epsilon)$ \Comment{{\small element-wise comparison}}
                    \State $\tilde{\tau}_p\gets\tilde{\tau}_p+1$ \Comment{{\small aux. variable initialized as 0}}
                \EndIf
            \EndFor
        \State $L^{\max}_r\gets\frac{1}{\tilde{\tau}_p}\sum_{t=1}^{\tilde{\tau}_p}L_t$
    \EndFor
    \State \Return $L^{\max}\gets\lceil\frac{1}{R}\sum_{r=1}^{R}L^{\max}_r\rceil$
	\end{algorithmic}
\end{algorithm}

\vspace{-4mm}

\section{Numerical Results}\label{sec:num-res}
In this section, we {evaluate the effectiveness of the spatial separability concept. Furthermore, we assess the proposed BCF and CF-SUCRe protocols via access performance and EE evaluations.} The simulation parameters are the same as in Table \ref{tab:simulation-parameters}, unless stated otherwise. In general, we fix the CF design to have $L=64$ APs over the same 8 $\times$ 8 square grid, as in Fig. \ref{fig:illustration-ccal-pcal}. {As a baseline scheme, we naturally consider the Ce-SUCRe by \cite{Bjornson2017} using the estimator from Remark \ref{remark:estimator-cellular}.}

\vspace{2mm}
\noindent {\textbf{Evaluation Metrics.} \emph{Spatial separability:} We evaluate the concept of spatial separability through the two metrics introduced in Subsection \ref{subsec:analysis}: \textbf{i)} the \emph{probability of a nearby AP being exclusively serving the $k$-th UE through its chosen pilot}, $\Psi_k$; \textbf{ii)} the \emph{average number of exclusive-pilot-serving APs}, $\rho A^{\text{dom}}_k$. \emph{Access Performance:} The \emph{average number of access attempts} (ANAA) measures how many accesses on average an inactive UE needs to try such that it can successfully be admitted by the network after the first time the UE starts to be in $\mathcal{K}\subset\mathcal{U}$. We set the maximum NAA a UE can realize to 10 attempts and the probability of reattempt in the next coherence block, if the attempt fails, to 50\%. \emph{Energy Efficiency:} 
The EE on the network's side is evaluated by defining the \emph{total consumed power} (TCP) for the type of network and the RA protocol in question. For the CF-SUCRe, the TCP is
\begin{equation}
    \mathrm{TCP}^{\mathrm{CF}}=\mathrm{ANAA}^{\mathrm{CF}} \cdot \underbrace{(\tau_p+1)}_{\text{\# of DL symbols}}\cdot\underbrace{\vphantom{\tau_{p}}(q_l\cdot\bar{\tau}_{pl})}_{\text{total power per AP}}\cdot\bar{L}, 
\end{equation}
where the superscript "${\mathrm{CF}}$" indicates dependency with the CF-SUCRe protocol, $0\leq\bar{\tau}_{pl}\leq\tau_p$ is the \emph{average number of active pilots per AP}, and $\bar{L}$ is the \emph{average number of operative APs}. The \# DL symbols corresponds to the sum of: \textbf{i)} $\tau_p$ symbols used to respond back to the UEs in Step 2; \textbf{ii)} 1 symbol is used to communicate back with the winning UEs in Step 4. For the other RA protocols, the above metric changes as follows: \textbf{a)} for the CF-SUCRe with Estimator 3, the DL total power of Step 2 depends on $\tilde{q}^{\text{avg}}_{l}$ instead of $q_l$; \textbf{b)} for the Ce-SUCRe, $q_l$ is replaced by $q$, $\bar{\tau}_{pl}$ is independent of $l$, and $\bar{L}=1$; and, \textbf{c)} for the BCF, the \# DL symbols is 1, $\bar{\tau}_{pl}$ is also independent of $l$, and $\bar{L}=L$ APs.
}

{
In Fig. \ref{fig:performance-evaluation}, we evaluate the potential of spatial separability and compare the performance of the RA schemes. Furthermore, we assess the practical methods for selection of parameters $\lvert\mathcal{C}_k\rvert$ and $L^{\max}$ proposed in Subsection \ref{subsec:selecting-pars}. {In general, performance and EE evaluations consider two different bounds for the CF-SUCRe protocol:}
a \textbf{lower bound} (Fig. \ref{fig:performance-lower-bound}) and a \textbf{practical bound} (Fig. \ref{fig:performance-practical}). The {lower bound} considers that the network is able to select the most appropriate $L^{\max}$ based on {exhaustive search and considering that UEs always consider the natural subset of nearby APs, $\check{\mathcal{C}}_k$, to compute estimates, an approach similar to the one performed to obtain Table \ref{tab:best-parameters-collision}}. This establishes the best median performance that one can obtain with the CF-SUCRe for each estimator. While the practical bound uses Algorithm \ref{algo:Lmax-selection} to select $L^{\max}$.
}

\begin{figure*}
    \centering
    \subfloat[][{\footnotesize {{Average} \# exclusive pilot-serving APs, $\rho A^{\text{dom}}_k$ in \eqref{eq:adom}, for $q_{l}=\frac{200}{L}$.}}]{\includegraphics[width=.32\textwidth]{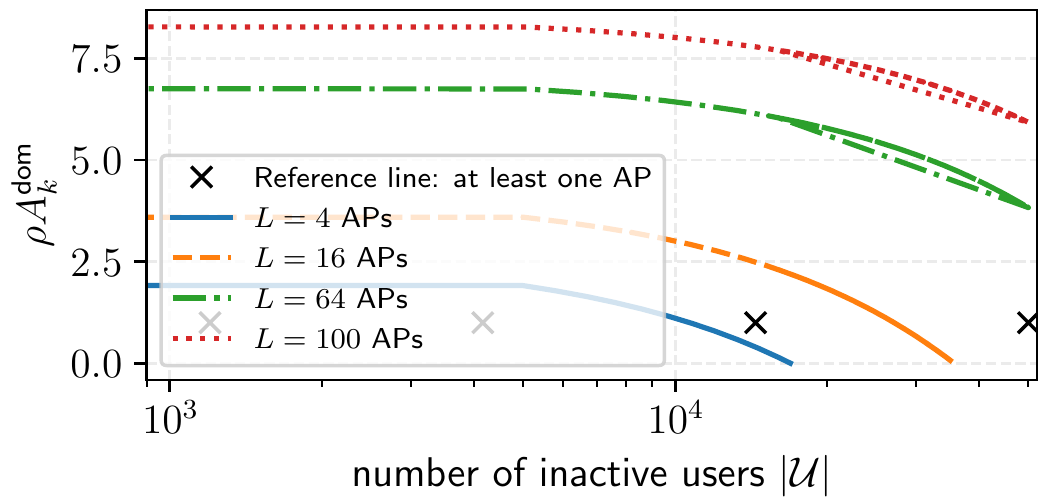}\label{fig:spatial-separability-adom}}
    \hfill
    \subfloat[][{\footnotesize Probability of a nearby AP being exclusively serving the $k$-th UE, $\Psi_k$ in \eqref{eq:Psi}, for $q_{l}=\frac{200}{L}$.}]{\includegraphics[width=.32\textwidth]{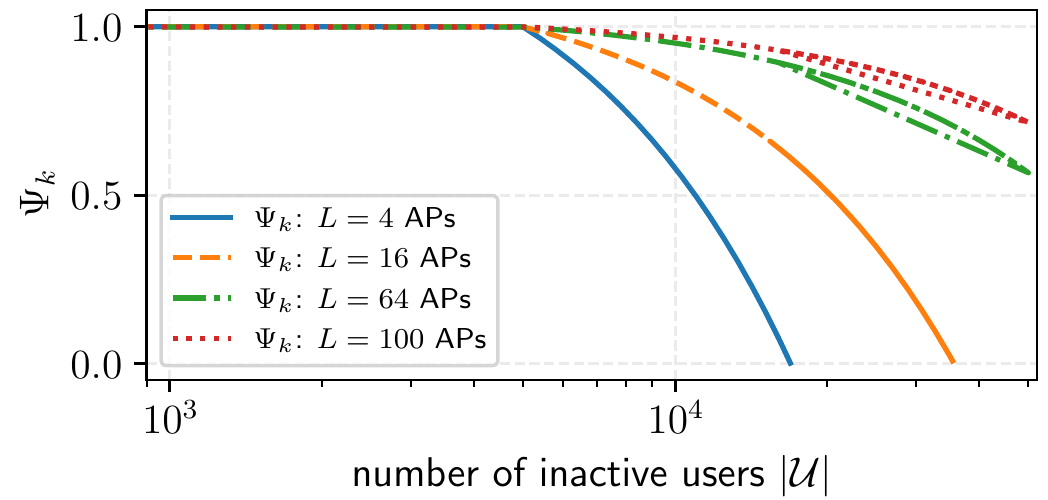}\label{fig:spatial-separability-psi}}
    \hfill
    \subfloat[][{\footnotesize {Lower bound for $L=64$ APs.}}]{\includegraphics[width=.32\textwidth]{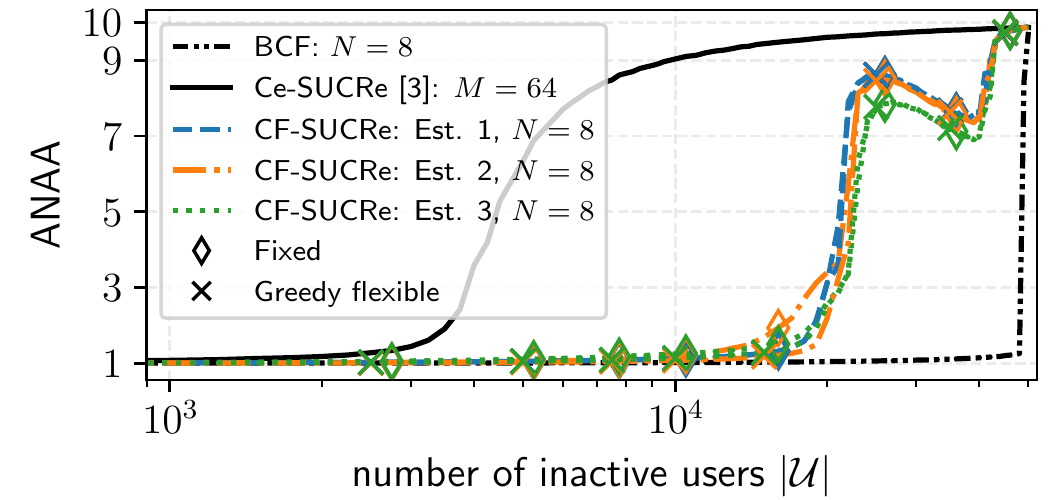}\label{fig:performance-lower-bound}}
    \vskip\baselineskip
    \subfloat[][{\footnotesize {Practical bound (Algorithm \ref{algo:Lmax-selection}) for $L=64$ APs.}}]{\includegraphics[width=.32\textwidth]{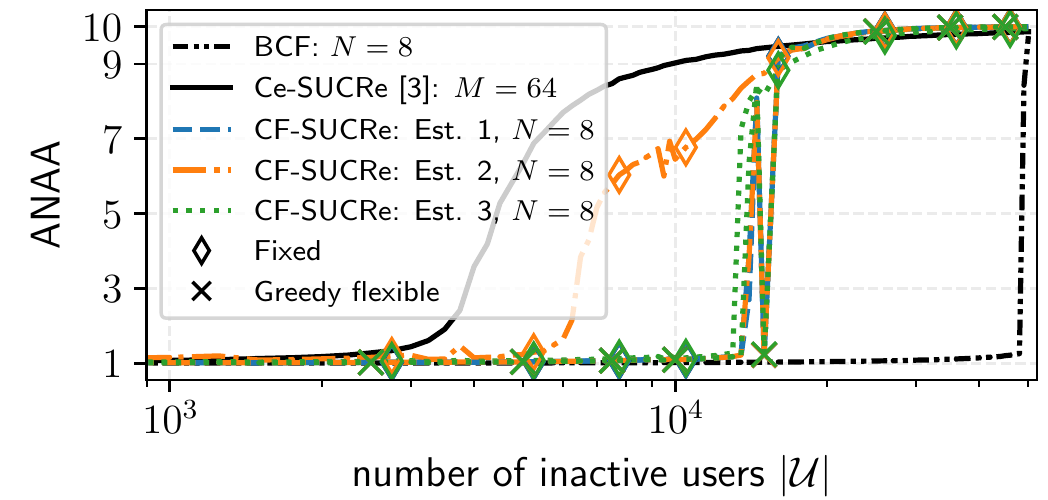}\label{fig:performance-practical}}
    \hfill
    \subfloat[][{\footnotesize {Practical bound (Algorithm \ref{algo:Lmax-selection}) using the greedy flexible method for a fixed $|\mathcal{U}|=10,000$.}}]{\includegraphics[width=0.32\textwidth]{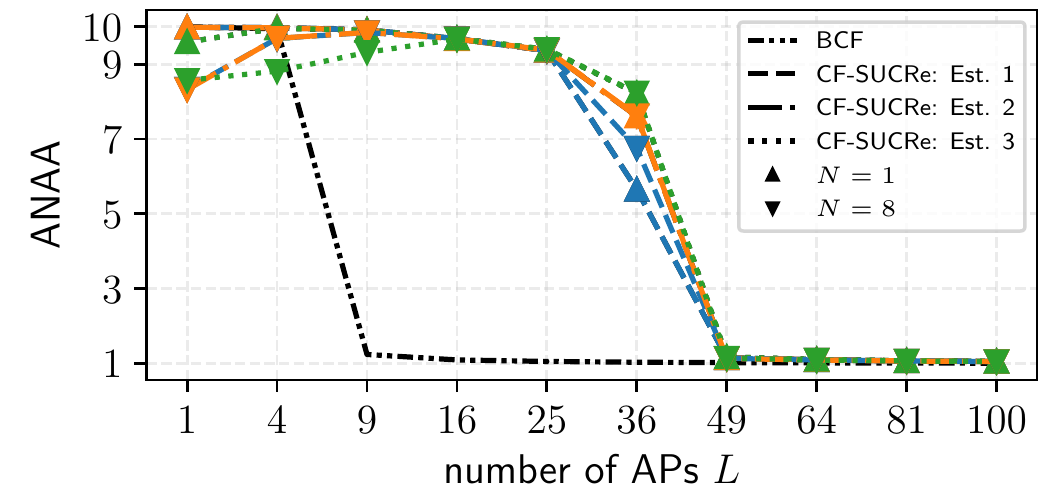} \label{fig:performance-varying}}
    \hfill
    \subfloat[][{\footnotesize {EE evaluation of RA methods. Only the \emph{greedy flexible method} of $|\mathcal{C}_k|$ is considered.}}]{\includegraphics[width=0.32\textwidth]{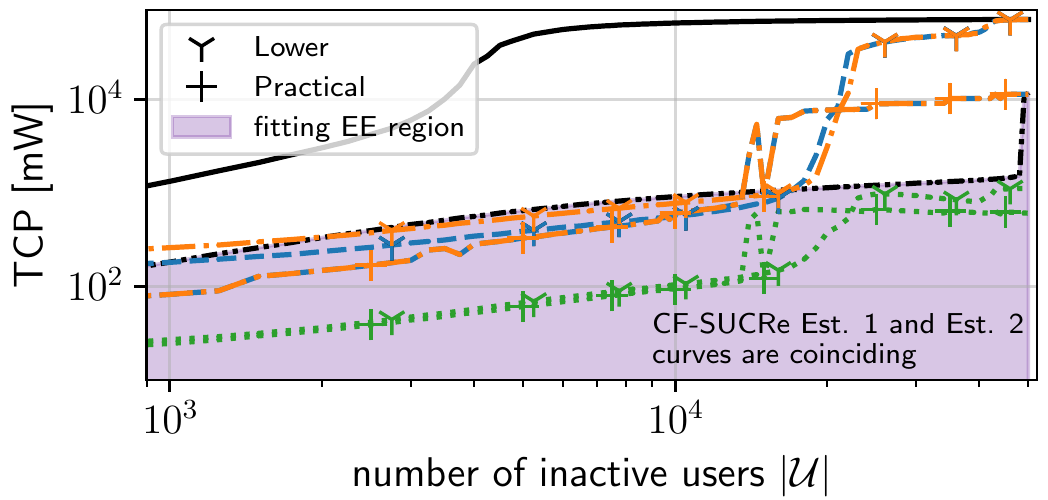} \label{fig:ee-evaluation}}
    
    \caption[]
    {\small {Evaluation of spatial separability, performance{, and EE} of the RA methods: BCF, Ce-SUCRe \cite{Bjornson2017}, and CF-SUCRe. The two selection methods of $|\mathcal{C}_k|$ are assessed for CF-SUCRe: fixed and greedy flexible. In general, the CF-mMIMO network is comprised of $L=64$ APs disposed in a 8 $\times$ 8 square grid layout. Important fixed parameters are: $\tau_p=5$ pilots, $P_a=0.1\%$, maximum \# attempts is 10, probability of reattempts is 50\%. Other parameters are available in Table \ref{tab:simulation-parameters}. {Legends are similar between Figs. (c), (d), and (e)}.}} 
    \label{fig:performance-evaluation}
\end{figure*}

{Figs. \ref{fig:spatial-separability-adom} and \ref{fig:spatial-separability-psi} assess the validity of the conditions \textit{\textbf{(a)}} and \textit{\textbf{(b)}} used to define the concept of spatial separability in Definition \ref{def:3}. The conditions are better attained as $\rho A^{\text{dom}}$ is greater than 1 (at least one exclusive pilot-serving AP) and $\Psi_k$ is closer to 1 {(probability of having a nearby AP exclusively serving the $k$-th UE).} The difference between the curves for different $L$ is due to the change in the DL transmit power $q_l$, which changes the limit distance $d^{\lim}$ in \eqref{eq:limit-distance} {and, consequently, the areas used in Subsection \ref{subsec:analysis}}. The plots reveal that spatial separability improves with $L$ and is achievable, {allowing the network to provide access to multiple UEs simultaneously reusing the same pilot}. {However, by increasing the number of inactive users $|\mathcal{U}|$, the greater the probability of collisions and the size of the collision becomes, eventually reducing the dominant area $A^{\text{dom}}_k$ in \eqref{eq:adom} and, hence, $\Psi_k$ to zero. In particular, for the given set of parameters, the spatial separability starts to drop as from around $|\mathcal{U}|=5,000$ inactive UEs, where the average collision size is $\overline{\lvert{\mathcal{S}}_t\rvert}=({\lvert \mathcal{U}\rvert P_a})/{\tau_p}=1$. Finally, note that the theoretical curves for $L=64$ and $L=100$ APs present oscillations in the range of $|\mathcal{U}|$ greater than $20,000$.} 
}

{
In terms of performance, Figs. \ref{fig:performance-lower-bound} and \ref{fig:performance-practical} show that our CF protocols remarkably outperform the Ce-SUCRe. Based only on the concept of spatial separability, the BCF scheme performs better. The CF-SUCRe performs well and is based on combining two collision resolution strategies: the SUCRe rule in \eqref{eq:decision} and the spatial separability concept in {Definition \ref{def:3}}. Note that the inflection points at which the performance of CF methods abruptly deteriorate are intricately related to when the potential for spatial separability is extinguished in Figs. \ref{fig:spatial-separability-adom} and \ref{fig:spatial-separability-psi}, indicating consistency of the analysis carried out in Subsection \ref{subsec:analysis}. For the case of CF-SUCRe, on one hand, the SUCRe and estimators adversely change the {theoretical} capacity of spatial separability {derived in Subsection \ref{subsec:analysis}}. On the other hand, with the combination of SUCRe and spatial separability, we can improve EE w.r.t. BCF, as shown in the sequel. {Notice that for the CF-SUCRe protocol, the oscillations presented in the ANAA performance around 20,000 inactive UEs can be explained by the same effects observed in Figs. \ref{fig:spatial-separability-adom} and \ref{fig:spatial-separability-psi} together with the use of the SUCRe rule. }
}

{
{Regarding the proposed algorithms, }the "greedy flexible" method for selection of the number of nearby APs, $|\mathcal{C}_k|$, works substantially better than the fixed one in all cases presented. Further, for the CF-SUCRe, the difference between the lower and practical bounds is modest. This implies that, despite being simple, the method for selection of $L^{\max}$ proposed in Algorithm \ref{algo:Lmax-selection} is effective. In general, Est. 2 with greedy flexible achieves the best average performance compared to the cellular case, since it is based on less assumptions. However, it can also be noted that Estimator 3 presents a very good performance in comparison with the other estimators, for both bounds. This is because Estimator 3 outperforms the other two estimators in the case where there is no collision $|\mathcal{S}_t|=1$ and matches well with the performance of the others as the size of the collisions increases, as shown in Fig. \ref{fig:collision-size-estimators}.
}

{
The performance of the BCF and Ce-SUCRe protocols for different number of APs, $L$, and number of antennas per AP, $N$, is evaluated in Fig. \ref{fig:performance-varying} by considering the greedy flexible method (practical bound), and a fixed number of inactive UEs of $|\mathcal{U}|=10,000$. When $L=1$, the system is collapsed to the Ce-SUCRe. For single-antenna APs, notice that ANAA does not always decrease with increasing $L$ for the CF-SUCRe. This is due to the poor performance of the estimators, since the asymptotic approximations ($N\rightarrow\infty$) do not hold. However, for a sufficiently large $N$, ANAA starts to get better and better as we increase $L$. Eventually, CF-SUCRe get the same performance as the BCF
by increasing the number of APs.
}

Fig. \ref{fig:ee-evaluation} evaluates the EE of the RA schemes. Clearly, our proposed CF protocols are more energy efficient than the Ce-SUCRe, mainly because their performance is really superior. But also because not all APs are operative {in the RA phase} for the CF-SUCRe case. This gain becomes clearer when comparing the {CF protocols among themselves}. The BCF protocol uses all $L^{\max}=L=64$ APs to serve the pilots, while the CF-SUCRe does not, which explains why the TCP of CF-SUCRe schemes can be lower than that of the BCF scheme in certain scenarios, even with the BCF having better performance in Fig. \ref{fig:performance-evaluation}. The colored region in Fig. \ref{fig:ee-evaluation} illustrates the EE region of interest in which CF-SUCRe may be a more interesting method than BCF from the energetic point-of-view. Interestingly, Estimator 3 exhibits the best EE gains, due to the reduced effective DL transmit power $\tilde{q}_{lt}$ in \eqref{eq:est3-eff-tx-power}. This fact together with the best performance attained by the Est. 3 in Fig. \ref{fig:performance-lower-bound} can motivate the use of the CF-SUCRe scheme with Est. 3 in a wide range of practical scenarios of interest.

\section{Conclusions}\label{sec:conclusion}
In this work, we considered the extension of the SUCRe method to user-centric CF-mMIMO networks. As we carry out such an extension, we observed that the macro-{diversity} introduced by the user-centric perspective of the CF network can naturally help in the resolution of collisions. With that, we introduced the concept of spatial separability. Then, we proposed two GB RA protocols {for CF systems}: {\bf i)} the BCF that only resolves collisions via spatial separability and {\bf ii)} the CF-SUCRe that combines SUCRe and spatial separability resolutions. For the CF-SUCRe to be implementable, we introduced: \textbf{a)} three estimators to perform SUCRe, \textbf{b)} two methods to select the set of nearby APs (fixed and greedy flexible), and \textbf{c)} one method to select the set of pilot-serving APs (Algorithm \ref{algo:Lmax-selection}). Our numerical results revealed that our CF RA protocols exceedingly outperforms the Ce-SUCRe from \cite{Bjornson2017} under an "equivalent" Ce-mMIMO network. For example, the average EE measured in terms of TPC of CF-SUCRe compared to that of the Ce-SUCRe is on average 340$\times$ lower, reaching up to 800$\times$ in some settings. Moreover, despite the additional overhead, CF-SUCRe performs as well as the BCF for a wide range of inactive UEs {size,} but with an average EE 3$\times$ smaller. {Finally, we evaluate analytically the potential of spatial separability, showing that the technique is achievable in practice. Future research directions can combine the spatial separability principle with other resolution techniques.}

\appendix
We solve \eqref{eq:optz-problem} with the method of Lagrange multipliers. The Lagrange function is:
$
\Gamma(\boldsymbol{\alpha}_t, \mu) = f(\boldsymbol{\alpha}_t) - \mu g(\boldsymbol{\alpha}_t),
$
where $\mu$ is the Lagrange multiplier. We check the constraint feasibility by evaluating: $\nabla_{\boldsymbol{\alpha}_t} f(\boldsymbol{\alpha}_t)=\mu \nabla_{\boldsymbol{\alpha}_t} g(\boldsymbol{\alpha}_t)$. The gradients are:
\begin{equation}
    \nabla_{\boldsymbol{\alpha}_t} f(\boldsymbol{\alpha}_t) \in\mathbb{R}_{+}^{|\mathcal{P}_t|} = \mathbf{1},\nonumber
\end{equation}
\begin{equation}
    [\nabla_{\boldsymbol{\alpha}_t} g(\boldsymbol{\alpha}_t)]_{l} \in\mathbb{R}_{+} = \dfrac{1}{2} \dfrac{\sqrt{q_lp_k}\tau_p\beta_{kl}}{(\alpha_{lt}+\sigma^2)^{3/2}}, \ \forall l \in\mathcal{P}_t.
    \nonumber
\end{equation}
Solving $\nabla_{\boldsymbol{\alpha}_t} f(\boldsymbol{\alpha}_t)=\mu\nabla_{\boldsymbol{\alpha}_t} g(\boldsymbol{\alpha}_t)$ for $\alpha_{lt}$, we get: {$\alpha_{lt} = (\frac{\mu}{2}\sqrt{q_lp_k}\tau_p\beta_{kl})^{2/3} - \sigma^2$.} Substituting this into $g(\boldsymbol{\alpha}_t)$ gives
\begin{equation}
    \left(\dfrac{\mu}{2}\right)^{2/3} = N \left(\dfrac{\sum_{l\in\mathcal{P}_t}(\sqrt{q_lp_k}\tau_p\beta_{kl})^{2/3}}{\Re(z_k)}\right)^2.
    \nonumber
\end{equation}
Plugging the value of $(\mu/2)^{2/3}$ into $\alpha_{lt}$ completes the proof.


\bibliographystyle{IEEEtran}
\bibliography{main.bib}

\end{document}